\documentclass[sigconf]{acmart}
\AtBeginDocument{%
  }

\setcopyright{acmlicensed}
\copyrightyear{2018}
\acmYear{2018}
\acmDOI{XXXXXXX.XXXXXXX}
\acmConference[Conference acronym 'XX]{Make sure to enter the correct
  conference title from your rights confirmation email}{June 03--05,
  2018}{Woodstock, NY}
\acmISBN{978-1-4503-XXXX-X/2018/06}

\usepackage{lipsum}
\usepackage{tikz}
\usepackage{amsmath}
\usepackage{stfloats}
\usepackage{url}
\usepackage{verbatim}
\usepackage{graphicx}
\usepackage{dsfont}
\usepackage{amsfonts}
\usepackage{array}
\usepackage{hyperref}
\usepackage[caption=false,font=normalsize,labelfont=sf,textfont=sf]{subfig}
\usepackage{algorithm2e}
\usepackage{footmisc}
\RestyleAlgo{ruled}
\SetKwComment{Comment}{/* }{ */}
\usepackage{booktabs}
\usepackage[table]{xcolor}
\usepackage{colortbl}
\usepackage{filecontents}
\usepackage{multirow}

\usepackage{amssymb}
\usepackage{amsthm}
\usepackage{mathtools}
\usepackage{bbm}
\usepackage{bm}
\usepackage{dsfont}
\usepackage{enumitem}
\usepackage{makecell}
\usepackage{slashbox}

\definecolor{Gray}{gray}{0.9}
\definecolor{LightCyan}{rgb}{0.88,1,1}

\newcolumntype{a}{>{\columncolor{Gray}}c}
\newcolumntype{b}{>{\columncolor{white}}c}

\newcommand{\mcl}[1]{\mathcal{#1}}

\newcommand{\mbb}[1]{\mathbb{#1}}

\newcommand{\prg}[1]{\vskip 6pt \noindent \textbf{#1}.}
\newcommand{\prginit}[1]{\noindent \textbf{#1}.}

\newcommand{\Lpix}[0]{L_{\text{pix}}}
\newcommand{\Lpct}[0]{L_{\text{pct}}}
\newcommand{\Lfrq}[0]{L_{\text{frq}}}
\newcommand{\Lsmt}[0]{L_{\text{smt}}}
\newcommand{\clip}[0]{\text{CLIP}}

\newcommand{\alex}[0]{\text{Alex}}
\newcommand{\onehot}[0]{\text{OneHot}}
\newcommand{\ppred}[0]{\hat{p}}
\newcommand{\R}[0]{\mbb R}

\newcommand{\one}[0]{\mathds{1}}
\newcommand{\enc}[0]{\text{enc}}

\newtheorem{thm}{Theorem}[section]

\newtheorem{lem}{Lemma}[section]

\numberwithin{equation}{section}

\begin{document}

\title{Hide\&Seek: Remove Image Watermarks with Negligible Cost via Pixel-wise Reconstruction}

\author{Huajie Chen$^1$, Tianqing Zhu$^{\clubsuit, 1}$, Hailin Yang$^1$, Yuchen Zhong$^1$, Yang Zhang$^2$, Hui Sun$^1$, Heng Xu$^1$, Zuobin Ying$^1$, Lihua Yin$^3$, Wanlei Zhou$^1$}

\renewcommand{\shortauthors}{Chen et al.}

\begin{abstract}
    Watermarking has emerged as a key defense against the misuse of machine-generated images (MGIs). 
    Yet the robustness of these protections remains underexplored. 
    To reveal the limits of state-of-the-art proactive image watermarking defenses, we propose HIDE\&SEEK (HS), a suite of versatile and cost-effective attacks that reliably remove embedded watermarks while preserving high visual fidelity. 
    Unlike prior approaches, HS only focuses on the pixels that are most critical to the watermark. 
    HS comprises two stages, namely HIDE and SEEK.
    In the HIDE stage, HS identifies and masks the pixels most critical to the watermark’s structure;
    in the SEEK stage, HS subsequently pixel-wise reconstructs only those with a generative model, leaving the remainder of the image untouched.
    This targeted strategy makes HS both \emph{query-free} and \emph{knowledge-free}, and does not requires access to the watermark detector nor knowledge of the underlying watermarking scheme. 
    Extensive experiments across multiple state-of-the-art watermarking methods show that HS consistently defeats current defenses and achieves better performance over existing removal attacks, whilst delivering stronger watermark erasure and higher image quality. 
    Our findings highlight fundamental vulnerabilities in current watermarking approaches and call into question their robustness as safeguards for MGIs.
    Our code will be released after acceptance.
\end{abstract}

\begin{CCSXML}
<ccs2012>
<concept>
<concept_id>10002978</concept_id>
<concept_desc>Security and privacy</concept_desc>
<concept_significance>500</concept_significance>
</concept>
<concept>
<concept_id>10002978.10002991.10002996</concept_id>
<concept_desc>Security and privacy~Digital rights management</concept_desc>
<concept_significance>500</concept_significance>
</concept>
</ccs2012>
\end{CCSXML}

\ccsdesc[500]{Security and privacy}
\ccsdesc[500]{Security and privacy~Digital rights management}

\keywords{AIGC Watermarking, Security Attack, Deep Learning}


\maketitle

\section{Introduction}

\begin{figure}[t!]
\centering
\includegraphics[width=\linewidth]{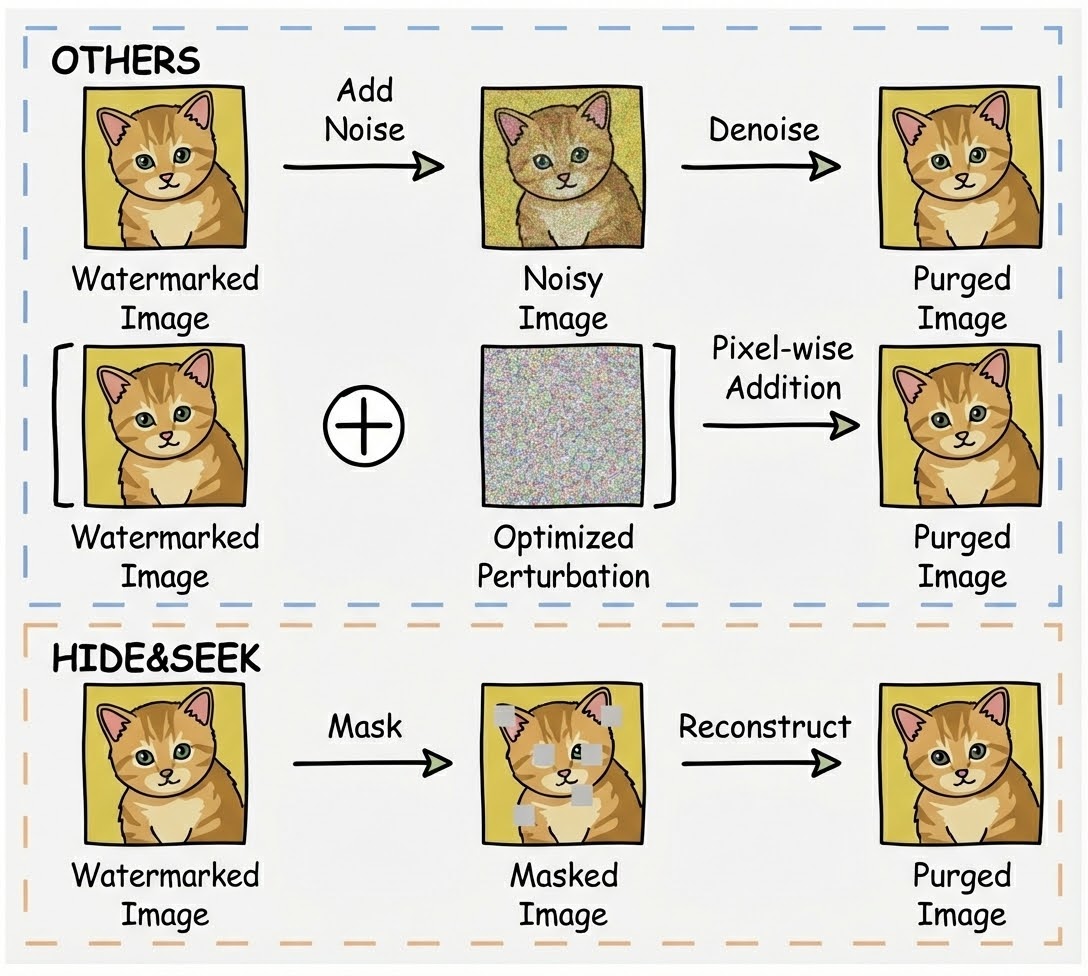}
\caption{\textbf{Overview of HIDE\&SEEK}. Compared to existing attacks that introduce global modification to the watermarked image so as to remove the embedded watermark, HIDE\&SEEK locates the vulnerable pixels and modifies them to purge the watermark while preserving high visual quality.}
\label{fig:overview}
\end{figure}

Machine-generated images (MGIs) has gained widespread adoption in many domains such as media~\cite{zhu2023genimage}, design~\cite{tang2025towards}, and entertainment~\cite{cao2025survey}. 
Despite their popularity and creative potential, MGIs have also been increasingly misused for malicious purposes, including political propaganda~\cite{shen2025gptracker}, explicit content~\cite{ma2025meme}, and hate speech~\cite{shen2025hatebench}. 
This prompts the development of many defense mechanisms to ensure the responsible use and distribution of MGIs~\cite{lukas2023ptw}.

Broadly, these defenses fall into two categories: passive MGI detection and proactive MGI watermarking. 
Passive MGI detection typically trains deep neural networks to distinguish MGIs from real images~\cite{zhao2021multi,frank2020leveraging}. 
Although such methods can achieve strong in-distribution performance, they often struggle to generalize to unseen generative or diffusion models~\cite{liu2023making}. 
Proactive watermarking, on the other hand, embeds imperceptible, verifiable signals (i.e., watermarks) into generated images at creation time, enabling post-hoc attribution and accountability~\cite{wen2023tree,zhu2018hidden,fernandez2023stable,yu2022responsible,yu2021artificial,lukas2023ptw,tancik2020stegastamp,jois2024pulsar}.
Owing to their model-agnostic guarantees, watermarking has emerged as a preferred approach for researchers and industry.

Nevertheless, a cat-and-mouse dynamic persists between watermark researchers and adversaries.
Watermarking remain vulnerable to removal strategies, such as regeneration~\cite{an2024waves,zhao2024invisible}, adversarial~\cite{jiang2023evading,saberi2024robustness}, and perturbation attacks~\cite{kassis2025unmarker}, that aim to erase embedded signals while preserving image fidelity. 
However, existing attacks face a fundamental trade-off among attack effectiveness, image fidelity, and computational efficiency.
To day, none achieves all three simultaneously to meaningfully undermine watermarking schemes in realistic settings.
This motivates us to address this gap. 
In this paper, we introduces a class of practical watermarking removal mechanisms that can maintain high attack success rates, preserve visual fidelity and, at the same time, achieve computationally efficient.
Our findings challenge the robustness of current watermarking schemes under realistic threat models and provide new insights into the evolving security landscape of MGI watermarking.

\prg{Our Work}
To reveal the limits of the SOTA watermarking defenses, we propose \textbf{Hide\&Seek}, an attack that removes image watermark via pixel-wise reconstruction.
In general, attacks on MGI watermarking schemes share the same objective ad similar approaches:
\textit{To create a purged image with the slightest modification added to the watermarked image}.
Furthermore, \textit{the modification must alter both high- and low-frequency components of the watermarked image so as to sabotage non-semantic and semantic watermark} \cite{kassis2025unmarker}.
Hence, our method is composed of two stages, namely ``Hide" and ``Seek", which respectively maximizes discrepancy of the watermarked and the purged image in frequency and semantic domain and minimizes the visual loss between them.

We first start with a simple version of the method that is named HIDE\&SEEK Naive (HSN).
The core idea of HSN is to modify certain regions of the watermarked image to create discrepancies in both the frequency and semantic domains and to generate the purged image.
To achieve this goal, the watermarked image is divided into multiple patches and masked with a selected strategy and a ratio in stage \textit{HIDE}.
Next, the masked patches are reconstructed using an auto-regressive model to create discrepancies in stage \textit{SEEK}.
By adjusting the masking strategy and the masking ratio, the adversary can modify only a part of the watermarked image, leaving the rest of the watermarked image untouched.
Thus, HSN creates a purged image with the least visual loss.

Even though HSN provides sufficient attack performance in evaluation, its masking and reconstruction schemes can still be improved.
For example, pixels at the edge of an object have greater impact on the high-frequency component of an image.
Therefore, if the adversary can measure the impact of pixels and apply different perturbation to the pixels based on the rank of the impact, the purged image will have better fidelity with guaranteed attack effectiveness.

We then further develop HSN into an elevated one named HIDE\&SEEK Plus (HS+).
In HS+, the adversary aims to locate pixels that have the most significant impact on the frequency and semantic domain of the watermarked image (denoted as vulnerable pixels).
These pixels are ranked with their vulnerability and reconstructed pixel-wise following a descending order based on the impact rank to incur the maximized discrepancies, where the most vulnerable pixel is reconstructed at last.
In stage \textit{HIDE}, the adversary uses a pre-trained masking model that takes an arbitrary watermarked image and produces a predicted mask.
The predicted mask carries the information about the vulnerability of pixels that guides both the masking and the reconstruction process.
The vulnerable pixels are then masked for reconstruction.
In stage \textit{SEEK}, the adversary utilizes a trained pixel generator that reconstructs a purged image from the masked image. 
The purged image is reconstructed pixel by pixel, where the most vulnerable pixel is reconstructed at last.
The idea is to utilize the intrinsic property of the auto-regressive model, where its previous output becomes its input in the next iteration.
The bias of the prediction therefore accumulates during the iterative reconstruction process.
Meanwhile, the generator is solely trained on clean images, which guarantees the visual and perceptual quality of the purged image.

Several research questions are encountered during the design and implementation process of Hide\&Seek:
\textbf{RQ1)}
How to create a mask that results in a sufficient discrepancy in both frequency and semantic domain with the least masked pixels?
\textbf{RQ2)}
How to ensure that the purged image has high fidelity while not preserving the embedded watermark?
\textbf{RQ3)}
How to guaranty attack performance with reduced computational cost?

For these questions, we have the following answers:
\textbf{A1)}
In HSN, a valid mask can be efficiently created with a selected strategy and a reasonable mask ratio. By adjusting the strategy and the ratio, the adversary controls the range and intensity of the modification. In HS+, a mask model is trained with specially designed loss functions to predict an optimal mask that covers the vulnerable pixels.
\textbf{A2)}
An auto-regressive model is trained in HSN to maintain high image fidelity. In HS+, a pixel-wise generator is trained with pixel loss and perceptual loss to recursively predict the masked pixels. Both generators are trained on clean image datasets to ensure that the purged image is watermark-free and visually similar to the watermarked image.
\textbf{A3)}
The masking model and generator can be efficiently trained and can be well generalized. Hence, no optimization is needed for an attack on a specific image.

HIDE\&SEEK is evaluated with comprehensive experiments.
Compared with other SOTA attack methods from top venues, Both HSN and HS+ show superior attack performance and visual quality along with moderate computational cost.
Additionally, HIDE\&SEEK reveals the vulnerability of current watermarking defenses.
This necessitates further research in this area to prevent AIGC misuses.
Our main contributions are listed as follows.
\begin{itemize}
    \item We propose HIDE\&SEEK, a versatile black-box, query-free attack that effectively removes watermarks in MGIs while maintaining high visual quality.
    \item Two versions of HIDE\&SEEK are devised to reveal the vulnerability of current SOTA watermarking defenses. HSN enables the adversary to launch efficient and effective attacks on the watermarked images by reconstructing the masked image with various masking strategies. 
    Further, HS+ creates a more delicate mask and employs a pixel-wise reconstruction technique to boost the attack performance with theoretical analysis.
    \item Comprehensive experiments are conducted to demonstrate the superior performance of HIDE\&SEEK compared with the SOTA baselines.
\end{itemize}

\section{Preliminaries \& Related Works}

\subsection{Defense for Machine Generated Images}

\prginit{Passive Detection}
Neural network detectors are trained to distinguish MGIs from real images based on high-frequency artifacts and semantic factitiousness \cite{li2020face,frank2020leveraging,zhao2021multi,liu2023making}.
In short, these methods aim to train a detector $\mcl D$ such that
\begin{equation}
\begin{aligned}
    \mathds{1}_{\mbb R^+}(\mcl D(X)) = \begin{cases}
        0 \quad \text{if $X$ is real}\\
        1 \quad \text{if $X$ is Fake} 
    \end{cases},
\end{aligned}
\end{equation}
where $\mathds{1}$ is the indicator function.
These approaches offer promising defensive results when they face the MGIs created by traditional generators, but fail to do so when advanced generators emerge \cite{kassis2023breaking}.
That is, if an image is produced by an unseen model, these approaches do not generalize well.

\prg{Proactive Watermarking}
Based on the taxonomy of Kassis et al. \cite{kassis2025unmarker}, current watermarking techniques for MGIs can be mainly divided into two classes: non-semantic and semantic watermarking.
These watermarking methods embed watermarks in the global spectral amplitudes associated with different levels of frequencies.
In general, these methods need to find a pair of embedding and retrieval functions $\mcl E(\cdot)$ and $\mcl R(\cdot)$ such that for any MGI $X$,
\begin{equation}
\begin{aligned}
    \min \mbb E \bigg[l(\mcl R(\mcl E(X, m) + \delta), m)\bigg],
\end{aligned}
\end{equation}
where $l$ denotes some distance metric function;
$\delta$ is some perturbative noise;
$m$ represents the bit-string watermark.

\textit{Non-semantic watermarking} methods \cite{zhu2018hidden,yu2021artificial,yu2022responsible,lukas2023ptw,fernandez2023stable,wang2024must,yang2024gaussian} embed the watermark into the high-frequency components of images.
These methods do not aim to change the content of images, and thus the watermark is embedded in the high-frequency components to which human eyes are less sensitive.
\textit{Semantic watermarking} methods \cite{tancik2020stegastamp,wen2023tree} embed watermarks in the low-frequency components of the images, which will drastically change the content of the images.
The reason is that the energy distribution is mainly concentrated in the low-frequency components.
Therefore, the robustness is enhanced with reduced visual quality.

\begin{table}[t!]
\centering
\label{tab:notations}
\caption{Notations}
\begin{tabular}{c|l}
\toprule
    Symbols & Definitions\\
\midrule
    $X, \tilde{X}$ & The original/reconstructed image\\
    \rowcolor{Gray}
    $\Lpix$ & The pixel loss\\
    $\Lfrq$ & The frequency loss\\
    \rowcolor{Gray}
    $\Lpct$ & The perceptual loss\\
    $\Lsmt$ & The semantic loss\\
    \rowcolor{Gray}
    $C, c$ & The number/index of channels\\
    $w, h$ & The width/height of the image\\
    \rowcolor{Gray}
    $M$ & The mask\\
    $I, \bar{I}, \hat{I}$ & The watermarked/masked/purged image\\
    \rowcolor{Gray}
    $E, D$ & The encoder/decoder model\\
    $H, G$ & The masking model/pixel generator\\
    \rowcolor{Gray}
    $(x, y)$ & The spatial coordinate of an image pixel\\
    $(u, v)$ & The frequency coordinate of an image pixel\\
    \rowcolor{Gray}
    $F$ & Two-dimensional discrete Fourier transformation\\
\bottomrule
\end{tabular}
\end{table}

\subsection{Detection Evasion Attacks}
\prginit{Regeneration Attacks}
In general, regeneration attacks first globally introduce noise $\delta$ to the watermarked image $X$ and then reconstruct the purged image $\tilde{X}$ from the noisy image, which is similar to the purification of adversarial examples \cite{nie2022diffusion,lee2023robust,tramer2019adversarial}.
This is defined as
\begin{equation}
\begin{aligned}
    \tilde{X} = G(X + \delta),
\end{aligned}
\end{equation}
where $G$ denotes the generator.
These attacks \cite{zhao2024invisible,zhou2024dataelixir,shi2023black} are \textit{black-box} and \textit{query-free}, meaning that the adversary does not need to have access to the detector.

\prg{Perturbation Attacks}
By optimizing a perturbative noise $\delta$ that is later introduced to the watermarked image $X$, the adversary removes the embedded watermark without the regeneration process, which is similar to adversarial perturbation \cite{hayes2018learning}.
This is defined as
\begin{equation}
\begin{aligned}
    \underset{\delta}{\arg \min} \ \mbb E [\|\delta\|^2 + \mcl D(X + \delta)].
\end{aligned}
\end{equation}
This category of attacks \cite{jiang2023evading,lukas2024leveraging,saberi2024robustness,kassis2025unmarker} applies global perturbations to watermarked image to remove the embedded watermark.

\subsection{Mask Autoencoders}
He et al. \cite{he2022masked} first propose Mask AutoEncoders (MAEs) as scalable vision learners. MAEs adopt the concept from auto-regressive language modeling and masked auto-encoding, where a portion of the data is masked, and MAEs learn to predict the masked content.
Li et al. \cite{li2025fractal} further devise the Fractal Generative Model (FGM) that recursively generates images pixel-wise.
During the recursion process, the next predicted pixel is created with the global state of the current masked image.
Notably, compared to DDPMs or Generative Adversarial Networks (GANs), training a FGM is relatively easier, because it does not have the model collapse phenomenon like GANs or requires multiple steps in the reverse process like DDPMs.
A cross-entropy loss function is employed to guide the training of FGM such that it predicts the RGB channels of a pixel as tokens in LLMs.

\section{Problem Formulation}

\subsection{Threat Model}
We consider two parties in this attack scenario: \textbf{Victim} and \textbf{Adversary}.
A \textbf{victim} is an organization or a company that establishes the GenAI platform and provides API services where the victim monetizes.
An \textbf{adversary} is a malicious user who attempts to evade the defense mechanisms and create harmful MGIs.

\prg{Victim's Goal}
The victim wants to ensure that the GenAI platform is not utilized to create malicious AIGC that spreads misinformation.
All images from the GenAI platform are therefore watermarked for MGI detection.
Given an arbitrary suspect image, the victim feeds the image into the detector network to extract the embedded watermark that determines whether it is a MGI.

\prg{Victim's Capabilities}
The victim is allowed to access all existing model structures and training algorithms. 
Adequate resources such as datasets and GPUs are granted to the victim such that GenAI models can be trained.
Additionally, GenAI APIs receive queries and return machine-generated images.
Intermediary results are not required to be given to users.

\prg{Adversary's Goal}
The adversary aims to utilize the victim's GenAI APIs to generate high-quality MGIs while removing the embedded watermark inside such that the image will not be determined as a MGI by any existing algorithms.

\prg{Adversary's Capabilities}
Due to insufficient resources, the adversary cannot afford to purchase high-quality datasets or rent servers for long-term large model training.
Direct access to the training dataset and the model structure used by the victim is not allowed for the adversary.
The adversary cannot interact with the detector of the victim.
However, it is possible for the adversary to collect an auxiliary dataset that shares a distribution similar to the training dataset of the victim.
The adversary can also access limited computational resources to train relatively simple local models.

\subsection{Formulated Objective}
\label{sect:form_obj}

\prginit{Attack Properties}
We may further conclude that a good watermark removal attack must possess the following properties:
\begin{itemize}
\item \textbf{Effectiveness}: The attack must ensure that no watermark can be detected or extracted from the purged image.
\item \textbf{Fidelity}: The purged image must not deviate too much from the watermarked image.
\item \textbf{Efficiency}: The resources and time consumption of the attack should be low.
\item \textbf{Versatility}: The attack can be applied to any image that is watermarked with any watermarking method.
\end{itemize}

\prg{Objectives}
The aforementioned attacks either introduce noises to the watermarked image to directly purge the embedded watermark or reconstruct the purged image from the noisy image.
Abstractly, these approaches can be concluded into one sentence:
\textit{The adversary removes the watermark by selectively adjusting the pixel values of the watermarked image.}
Thus, which pixels are to be changed and how much the pixels should shift together form the critical problem.

Let $I \in \R^{C \times w \times h}$ denote a watermarked image generated by the victim's GenAI APIs, where $C, w, h$ respectively denote the number of channels, width, and height of $I$.
The victim owns a detector $D: \R^{C\times w\times h} \mapsto \{0, 1\}$ that is able to determine whether a given image contains a watermark or not.
The ultimate goal of the adversary is to create a watermark-free image $\hat{I}$, also known as a purged image, such that $D(\hat{I})=0$, indicating that there is no watermark in $\hat{I}$.
In addition, $\hat{I}$ is required to be as similar as possible to $I$.
We then formulate this goal into an optimization problem:
\begin{equation}
\begin{aligned}
    \underset{\hat{I}}{\min} \ \mbb E \big[ \big \|\hat{I} - I \big\|^2 -\log \big(1- D(\hat{I})\big) \big].
\end{aligned}
\end{equation}
However, accessing $D$ is prohibited for the adversary in our settings.
Other metrics for image quality measurement must also be considered.
Hence, the optimization is further specified as follows.

It is demonstrated that robust watermarks reside in the global spectral amplitude \cite{kassis2025unmarker}.
This results in two schemes: semantic and non-semantic watermarks.
On the one hand, non-semantic watermarks refer to those watermarks hidden within the global spectral amplitudes of the images where the high-frequency components are concentrating.
The high-frequency components are less perceptible to human eyes, and are therefore intrinsically selected by deep-learning-based watermarking schemes.
However, such watermarks can be easily removed with slight perturbations on the high-frequency components \cite{chen2024high}.
On the other hand, semantic watermarks are embedded into the low-frequency components of the images that significantly influence the image perceptually.
Hence, such watermarks are inevitably visible to human eyes, but are more robust to perturbations.

Based on this assumption, the attack objective can be further formulated in the following form:
\begin{equation}
\begin{aligned}
    \underset{\hat{I}}{\max} \ \mbb E \big[ d_{\text{frq}}(\hat{I}, I) + d_{\text{smt}}(\hat{I}, I)\big],
\end{aligned}
\end{equation}
where $d_{\text{frq}}(\cdot, \cdot)$ and $d_{\text{smt}}(\cdot, \cdot)$ respectively denote the distance measure of frequency and semantic domain of the two input images.
The adversary thereby maximizes the frequency and the semantic discrepancy between $\hat{I}$ and $I$ so as to purge the watermark. 

Meanwhile, the adversary does not want to sacrifice too much image quality to remove the watermark.
Thus, the adversary aims to preserve both the pixel and the perceptual similarity between $\hat{I}$ and $I$, which is then formulated into
\begin{equation}
\begin{aligned}
    \underset{\hat{I}}{\min} \ \mbb E \big[ d_{\text{pix}}(\hat{I}, I) + d_{\text{pct}}(\hat{I}, I) \big],
\end{aligned}
\end{equation}
where $d_{\text{pix}}$ and $d_{\text{pct}}$ respectively denote the distance measure of pixels and perceptual similarity.
The adversary thereby minimizes the pixel and perceptual distances between the two input images so that the visual quality of $\hat{I}$ is maintained.

Eventually, we may combine all the above optimization objectives together to derive the below optimization problem:
\begin{equation}
\begin{aligned}
\underset{\hat{I}}{\min} \ \mbb E \big[ & \big(d_{\text{pix}}(\hat{I}, I) + d_{\text{pct}}(\hat{I}, I)\big) \\
 & - \big(d_{\text{frq}}(\hat{I}, I) + d_{\text{smt}}(\hat{I}, I)\big) \big].
\end{aligned}
\end{equation}
In general, the adversary aims to create the maximum discrepancy in the frequency and semantic domain of $\hat{I}$ and $I$ with the minimum visual loss.

\section{Hide\&Seek}

\subsection{Overview}
Despite the solid attack performance provided by previous studies, we discover that there are still several issues to address.
Regeneration attacks \cite{jiang2023evading} are highly effective in dealing with non-semantic watermarking methods, but fail to offer the same results when faced with semantic watermarking methods.
Perturbation attacks \cite{kassis2025unmarker} show promising results for both semantic and non-semantic watermarking methods.
Nevertheless, visual quality loss of the purged image caused by the perturbation is non-negligible, because the perturbation must affect the low-frequency components of the purged image.
Additionally, for each watermarked image, the perturbative noise is specifically optimized, resulting in a high computational cost.
Ultimately, the above attacks produce the purged image by globally modifying the watermarked image, which intrinsically diminishes the visual quality of the purged image.

We thus devise our watermark removal attack HIDE\&SEEK that consists of two stages: \textit{HIDE} and \textit{SEEK}.
Overall, in stage \textit{HIDE}, the adversary attempts to create a mask to \textit{hide} the pixels given an arbitrary watermarked image;
in stage \textit{SEEK}, the adversary then tries to \textit{seek} the hidden pixel values.
With the two stages, the adversary completes the watermark removal attack by modifying the values of the masked pixels while leaving the rest of the pixels untouched.
Compared to existing attacks, HIDE\&SEEK distinguishes itself from them by removing watermarks through partial image modification.
We further develop two versions of the attack, namely, HIDE\&SEEK Naive (HSN) and HIDE\&SEEK Plus (HS+).

\subsection{HIDE\&SEEK Naive}

\begin{figure*}[t!]
\centering
\includegraphics[width=.9\textwidth]{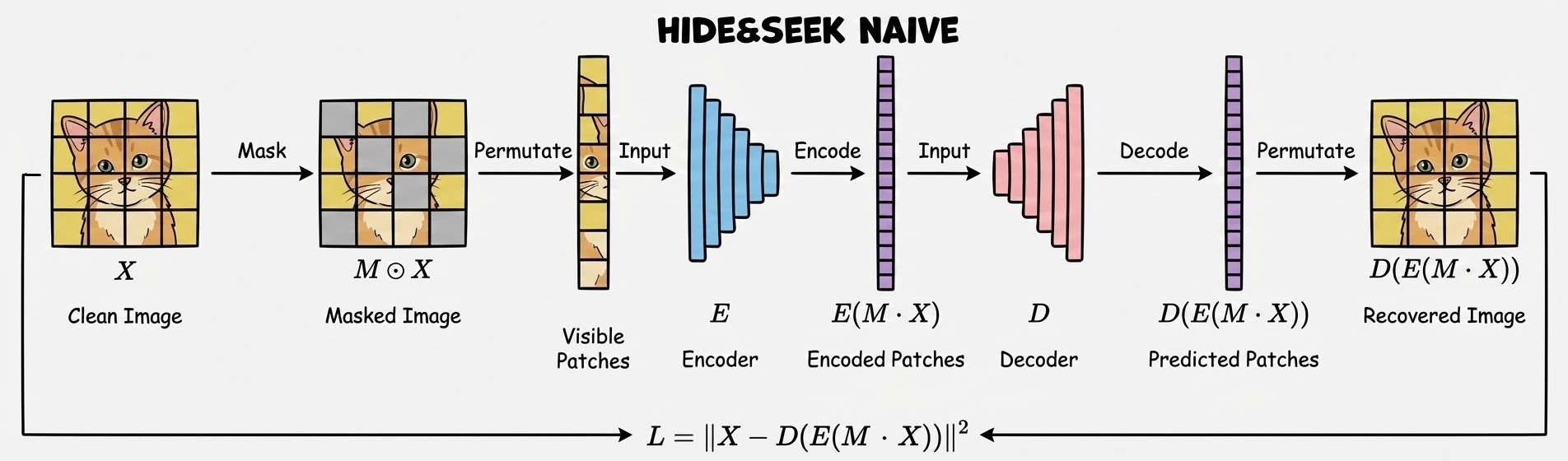}
\caption{\textbf{HIDE\&SEEK Naive Workflow}. The adversary first masks the watermarked image with a mask using a strategy and a ratio. Those visible patches are then encoded by the encoder and concatenated with the masked patches, which are then decoded by the decoder to derive the purged image.}
\label{fig:hsn_workflow}
\end{figure*}

\begin{figure}[t!]
\centering
\includegraphics[width=.9\linewidth]{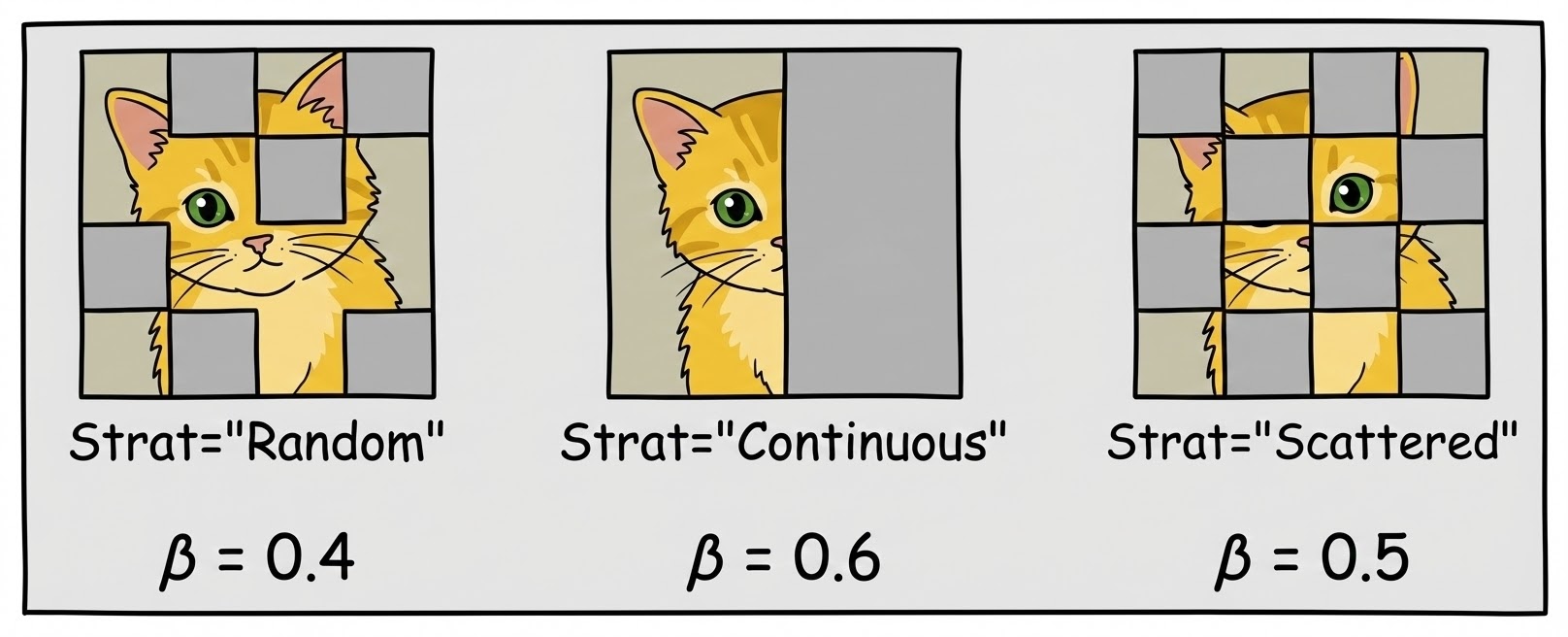}
\caption{\textbf{Masking Strategy}. HSN offers three types of masking strategies: Random, Continuous, and Scattered, where $\beta$ controls the masking ratio.}
\label{fig:masking_strategy}
\end{figure}

HSN is a simple yet effective version of HS.
The mask is generated by a fixed algorithm with a selected strategy and a hyperparameter $\gamma$ that controls the masking ratio.

\prg{Training}
As illustrated in \autoref{alg:hsn} and \autoref{fig:hsn_workflow}, to launch HSN, the adversary collects a clean dataset $\{X^1,...,X^n\}$ first, where $X^i \in \R^{C\times w\times h}$ denotes a watermark-free image.
Then, the adversary trains the generator model.
In HSN, the Masked Auto-Encoder (MAE) \cite{he2022masked} is employed as the architecture of the generator.
An MAE consists of an encoder $E:\R^{C \times w \times h} \mapsto \R^Q$ and a decoder $D:\R^Q \mapsto \R^{C \times w \times h}$, where $Q$ denotes the dimensionality of embeddings in the latent space.

During the training process, a mask $M \in \{0, 1\}^{w \times h}$ is randomly generated.
A clean image $X \in \{X^1, ..., X^n\}$ is then masked with $M$, denoted as $M \odot X$, where $\odot$ is the element-wise multiplication operator.
Next, $E$ takes in $M \odot X$ and encodes the visible pixels into tokens.
The masked pixels are tokenized and concatenated with the tokens together to form a vector $E(M \odot X)$ in the latent space following the order of pixels in $X$.
Eventually, $D$ takes in $E(M \odot X)$ to predict the masked pixels using the visible tokens to create the purged image $D(E(M \odot X))$.
By minimizing the regeneration loss between the clean image $X$ and the purged image $D(E(M \odot X))$, the adversary completes the training of $G$.
The regeneration loss is defined as
\begin{equation}
\begin{aligned}
L = \|X - D(E(M \odot X))\|^2,
\end{aligned}
\end{equation}
where only the masked pixels are included in the loss computation in order to lower the computational cost.

\prg{Masking Strategies}
Different types of masking eventually result in various effects on purged images.
As illustrated in \autoref{fig:masking_strategy}, we roughly divide masking strategies into three classes:
\textit{Random, Continuous, and Scattered}.
An image is first divided into multiple patches.
A random mask is spontaneously created;
a continuous mask is created by randomly selecting an initial patch.
The next patch to mask must be adjacent to the prior one;
a scattered mask must not have any patch that is adjacent to the other selected patches.

What distinguishes these strategies from each other is the effect brought by each of them \cite{he2022masked}.
The continuous mask has the most significant impact on the low-frequency components of $I$, whereas the scattered mask preserves the most frequency information.
The random mask allows greater $\beta$, meaning that more pixels can be masked with guaranteed reconstruction quality.
But its effect on the frequency domain of $I$ often fluctuates because of its randomness.

\prg{Evaluation}
As illustrated in \autoref{alg:hsn}, in evaluation, the adversary first receives a watermarked image $I \in \R^{C \times w \times h}$.
A mask $M \in \{0, 1\}^{w \times h}$ is then created with a masking strategy, and the masking ratio is controlled by the hyperparameter $\beta$.
The adversary eventually gets the purged image $\hat{I} = D(E(M \odot I))$ using the trained MAE.

\prg{Limits}
Even though HSN provides reliable watermark removal attack performance in evaluation, the area and the extent of the modification applied to the watermarked image can still be optimized.
The randomly generated mask with the selected strategy needs to cover a pretty high ratio of the entire area of the image.
Some of the masked regions may not contribute to watermark removal, and they are thus redundant to be modified.
Additionally, the MAE cannot provide the state-of-the-art image regeneration quality \cite{zhang2022mask}.
To this end, we need a more sophisticated masking model to locate the critical regions, and a more delicate generator to recover those regions.
Thus, we propose the advanced version of HIDE\&SEEK, namely HIDE\&SEEK Plus (HS+).

\begin{algorithm}[t!]
\caption{HIDE\&SEEK Random}\label{alg:hsn}
\tcc{Training}
\KwData{Training Dataset $\{X^1,...,X^n\}$}
\KwResult{Trained Encoder $E$, Decoder $D$}
\While{$L$ not converged}{
    $X \gets \text{sample}(\{X^1,...,X^n\})$\;
    $M \gets \text{create\_random\_mask}()$\;
    $L \gets \|X - D(E(M \odot X))\|^2$\Comment*[r]{Loss}
    Optimizer steps\;
}
\Return $E, D$\;

\vskip 5pt

\tcc{Evaluation}
\KwData{Watermarked Image $I$, Mask Proportion $\beta$}
\KwResult{Purged Image $\hat{I}$}
$M \gets \text{create\_mask\_by\_strategy}(\beta)$\;
$\hat{I} \gets D(E(M \odot I))$\;
\Return $\hat{I}$\;
\end{algorithm}

\subsection{HIDE\&SEEK Plus}

\begin{figure*}[t!]
\centering
\includegraphics[width=.9\textwidth]{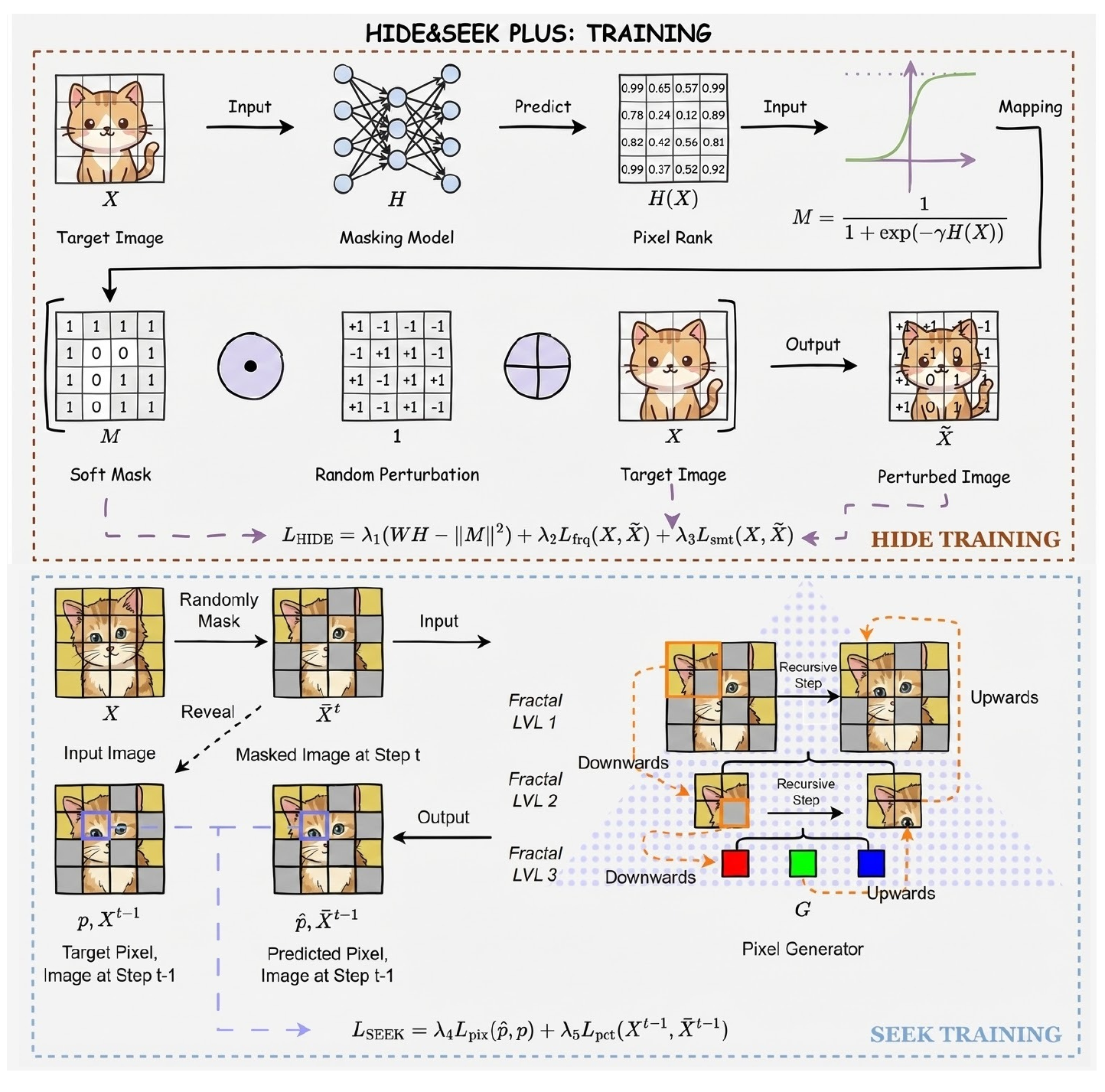}
\caption{\textbf{HIDE\&SEEK Plus Training Workflow}. In stage \textit{HIDE}, the masking model takes a target image to predict a rank for all pixels in the image. The rank is then mapped to a soft mask for random perturbation. The soft mask, the perturbed image and the target image are used to compute the HIDE loss. In stage \textit{SEEK}, A randomly masked image is fed into the pixel generator to predict the next pixel. The SEEK loss is then computed using the predicted pixel and the target pixel.}
\label{fig:hsp_workflow_training}
\end{figure*}

The core idea of HS+ is to locate the pixels that have the most significant impact on the watermarked image $I$ given the slightest modification, which is therefore named ``\textit{the vulnerable pixels}".
Then, these pixels are masked and reconstructed one by another following a ranked order, where the most vulnerable pixel is reconstructed at last in order to create larger bias.




\prg{Training}
To achieve this, the adversary trains a masking model $H: \R^{C \times w \times h} \mapsto R^{w \times h}$ that predicts the masking probability of pixels given a watermark-free image $X \in \{X^1,...,X^n\}$, as illustrated in \autoref{alg:hsp} and \autoref{fig:hsp_workflow_training}.
Given the logits $H(X)$, the adversary then needs to measure the vulnerability of the pixels so as to get the rank.
Since semantic and non-semantic watermarks reside in the high and low-frequency components of $I$, the selected pixels must have maximum impact on both the frequency and the semantic domain on $I$.
We first create a soft mask with a differentiable operation defined as follows:
\begin{equation}
\begin{aligned}
M = \frac{1}{1 + \exp[-\gamma(H(X))]},
\end{aligned}
\end{equation}
where $\gamma$ is a temperature parameter;
$\exp[\cdot]$ denotes the element-wise exponentiation.
For example, in $\exp[X]$, each element $X_{ij} \in X$ will be used as an exponent to create a matrix that shares the dimensionality with $X$.
Each element in $\exp[X]$ will become $e^{X_{ij}}$.
Thereby, $H(X)$ is scaled so that positive values are mapped close to $1$, whereas negative values are mapped close to $0$.
$M$ is thus called a soft mask because this scaling does not guarantee an exact mapping to $0$ or $1$.

The adversary uses the soft mask $M$ to give $X$ the minimal modification.
The least modification of a pixel should be $\pm1$, because a colorful pixel is an integer $p \in [0, 255]^3$.
Thus, the adversary randomly creates a perturbation $\one \in \{-1,1\}^{C \times w \times h}$.
$X$ is perturbed with $\one$ and $M$ to get the perturbed image $\tilde{X}$, denoted as $\tilde{X} = M \odot \one + X$.

The frequency discrepancy between $X$ and $\tilde{X}$ is measured by a frequency loss defined as 
\begin{equation}\label{eq:frq_loss}
\begin{aligned}
    \Lfrq(X, \tilde{X}) = & \frac{1}{wh} \sum_{u=0}^{w-1} \sum_{v=0}^{h-1} \omega(u, v) \cdot \\
    & |F_X(u, v) - F_{\tilde{X}}(u, v)|^2,
\end{aligned}
\end{equation}
where $\omega$ is the spectrum weight matrix defined as
\begin{equation}\label{eq:weight_spat_freq}
\begin{aligned}
    \omega(u,v) = & |F_X(u, v) - F_{\tilde{X}}(u, v)|^\alpha.
\end{aligned}
\end{equation}
Here, $\alpha$ is the scaling factor and is set to $1$ by default.
$F_X(u, v)$ denotes the complex frequency value of the 2D Fourier transform at the coordinate $(u, v)$ on the frequency spectrum, defined as
\begin{equation}\label{eq:2dfft}
\begin{aligned}
    F_X(u, v) & = \sum_{x=0}^{w-1} \sum_{y=0}^{h-1} X_{x, y} \exp(-i2\pi(\frac{ux}{w} + \frac{vy}{h}))\\
    & = a + bi,
\end{aligned}
\end{equation}
where $a$ and $bi$ respectively denote the real and imaginary part of the complex value $F_X(u,v)$.

The semantic discrepancy between $X$ and $\tilde{X}$ is measured by a semantic loss defined as
\begin{equation}\label{eq:smt_loss}
\begin{aligned}
    \Lsmt(X, \tilde{X}) = \| \clip(X) - \clip(\tilde{X}) \|^2,
\end{aligned}
\end{equation}
where CLIP \cite{radford2021clip} denotes the Contrastive Language-Image Pretraining model that maps input in different modality (text and image) into the same latent space.
Therefore, by measuring the Euclidean distance between the features of $X$ and $\tilde{X}$ in the latent space, we can measure the semantic discrepancy.


The goal of the adversary is to modify the least number of pixels to create the max discrepancies in the frequency and the semantic domains between the watermarked image and the purged image.
In the training process, we need to transform this problem into a loss minimization problem.
Hence, we can rewrite the goal, where the adversary modifies as many pixels as possible to create the minimum discrepancies in both the frequency domain and the semantic domain between the watermarked image and the purged image.
We can thus derive the following loss function:
\begin{equation}\label{eq:loss_hide}
\begin{aligned}
L_{\text{HIDE}} = & \lambda_1 (WH - \|M\|^2) + \lambda_2 \Lfrq(X, \tilde{X}) \\
& + \lambda_3 \Lsmt(X, \tilde{X}),
\end{aligned}
\end{equation}
where the $\lambda$s denote the weight parameters;
the first term measures the distance between the image area and the mask area.
By minimizing $L_{\text{HIDE}}$, $H$ gives lower scores to more vulnerable pixels.


\begin{algorithm}[t]
\caption{HIDE\&SEEK Plus}\label{alg:hsp}
\tcc{Training}
\KwData{Training Dataset $\{X^1,...,X^n\}$}
\KwResult{Trained Masker $H$ and Generator $G$}
\While{$L_{\text{HIDE}}$ not converged}{
    \tcc{Train $H$}
    $X \gets \text{sample}(\{X^1,...,X^n\})$\;
    $M \gets \frac{1}{1 + \exp[-\gamma H(X)]}$\Comment*[r]{Predict Soft Mask}
    $\tilde{X} \gets M \odot \one + X$\Comment*[r]{Get Perturbed Image}
    $L_{\text{HIDE}} \gets \lambda_1 \|M\|^2 + \lambda_2 \Lfrq(X, \tilde{X}) \newline + \lambda_3 \Lsmt(X, \tilde{X})$\Comment*[r]{Compute Loss}
    Optimizer steps\;
}
\While{$L_{\text{SEEK}}$ not converged}{
    \tcc{Train $G$}
    $X \gets \text{sample}(\{X^1,...,X^n\})$\;
    $M \gets \text{create\_random\_mask}()$\;
    $\bar{X}^{|M|} \gets M \odot X$\Comment*[r]{Mask Image}
    $\hat{p} \gets G(\tilde{X}^{|M|})$\Comment*[r]{Predict the next token}
    $\bar{X}^{|M|-1} \gets \enc(\bar{X}^{|M|}, \hat{p})$\;
    $L_{\text{SEEK}} \gets \lambda_4 \Lpix(\hat{p}, p) + \lambda_5 \Lpct(X^{|M|-1}, \bar{X}^{|M|-1})$\Comment*[r]{Compute Loss}
    Optimizer steps\;
}
\Return $H, G$\;

\tcc{Evaluation}
\KwData{Watermarked Image $I$}
\KwResult{Purged Image $\hat{I}$}
$M \gets \text{get\_hard\_mask}(\sigma(H(I)))$\Comment*[r]{Hard Mask}
$I^{|M|} \gets M \odot I$\Comment*[r]{Mask Image}
\For{$t \in [|M|,0]$}{
    $\hat{p} \gets G(I^{t})$\Comment*[r]{Predict the next token}
    $I^{t-1} \gets \enc(I^{t}, \hat{p})$\;
}
$\hat{I} \gets I^0$\;
\Return $\hat{I}$\;
\end{algorithm}

Next, in SEEK, the adversary reconstructs the masked image pixel by pixel to remove the embedded watermark.
The vulnerability rank provided by $H(X)$ is now used to guide this reconstruction process, where pixels having lower scores are reconstructed later.
The reason is that we cannot assume that any existing generator is the oracle model that provides perfect predictions in the real world.
Thus, in the auto-regressive process, the generative model accumulates errors due to its intrinsic property, where its previous prediction becomes its next input.
The bias of its prediction therefore reaches its peak in its last prediction.
Hence, reconstructing the most vulnerable pixel in the last step results in a greater prediction bias, which enlarges the modification on the pixel.

\begin{figure*}[t!]
\centering
\includegraphics[width=.9\textwidth]{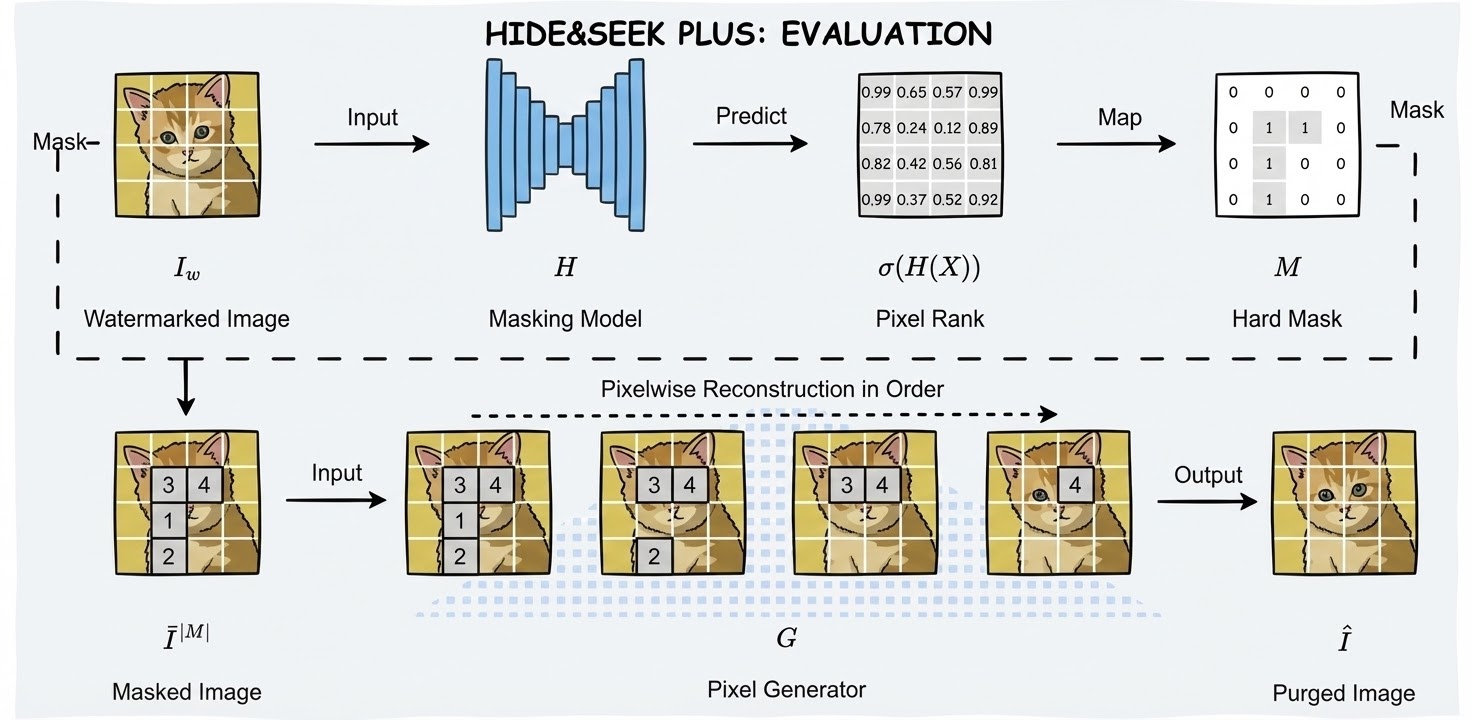}
\caption{\textbf{HIDE\&SEEK Plus Evaluation Workflow}. The watermarked image is fed into the masking model to produce a pixel rank. Pixels with probabilities less than $0.5$ are directly masked for reconstruction. Following the order provided by the pixel rank, the masked pixels are gradually reconstructed, where the most vulnerable pixel is reconstructed at last.}
\label{fig:hsp_workflow_eval}
\end{figure*}

To implement the pixel generator, we employ the state-of-the-art Fractal Generative Network (FGN) \cite{li2025fractal} as the model architecture.
FGN is designed based on the ``Divide \& Conquer" strategy.
This means that $G:\R^{C \times w \times h} \mapsto \R^{C \times S}$ is an auto-regressive model that consists of multiple fractal levels, and $G$ processes a masked image recursively.
Here, $S$ denotes the range pixel values, which is $256$ by default.
At each fractal level, the masked image is divided into multiple patches.
Each patch is then delivered to the next level for further process.
We denote this recursive process as $G(X) = G^n(...G^2(G^1(X)))$, where $G^j$ is the block at the $j$-th fractal layer in $G$.
In each iteration of the pixel-wise reconstruction, $G$ gives logits for only one pixel.
That is, for a mask $M$ of size $|M|$, it takes $|M|$ iterations in total to fully reconstruct the masked image.
Let $\bar{X}^t$ denote the reconstructed image in step $t$, we then have $\bar{X}^{t-1} = \enc(\bar{X}^t, G(\bar{X}^t))$.
$\enc(\cdot, \cdot)$ is an encoding function that takes in $\bar{X}^t$ and the logits of the predicted pixel $G(\bar{X}^t)$ and outputs $\bar{X}^{t-1}$, where the next pixel is reconstructed.

To train $G$, the adversary randomly creates a mask $M \in \{0,1\}^{w \times h}$ to get the masked image $\bar{X} = M \odot X$.
Then, the logit of the next pixel $\hat{p} = G(\bar{X})$ is derived.
The adversary uses the pixel loss defined below to measure the distance between the predicted and the target pixels:
\begin{equation}\label{eq:pix_loss}
\begin{aligned}
    \Lpix(\hat{p}, p) = & - \sum_{c=0}^{C} \sum_{b=0}^{B} \log \frac{e^{\hat{p}_{c, b}}\cdot \onehot(p)_{c, b}}{\sum_{b=0}^{B}e^{\hat{p}_{c,b}}},\\
\end{aligned}
\end{equation}
where $\ppred \in \R^{c \times b}$ denotes the logits predicted by $G$; 
$\ppred_{c,b}$ is the probability that this pixel has value $b$ in channel $c$;
$p$ is the original pixel sharing the same position with $\hat{p}$ in $\tilde{X}$;
$\onehot(\cdot)$ denotes the OneHot encoding function.
This distinguishes FGM from other generators, because an image is no longer treated as an entity but as a sequence of tokens as if in natural language processing.
By minimizing this loss, the adversary trains $G$ to make a correct prediction about the next pixel based on visible pixels.

The adversary also wants to preserve the best perceptual quality.
Hence, the perceptual distance between the reconstructed image $\bar{X}^{t-1}$ and the corresponding target image $X^{t-1}$ at step $t-1$ is measured by the perceptual loss defined as
\begin{equation}\label{eq:pct_loss}
\begin{aligned}
    \Lpct(X^{t-1}, \bar{X}^{t-1}) = \|\alex(X^{t-1}) - \alex(\bar{X}^{t-1}) \|^2,
\end{aligned}
\end{equation}
where $\alex(\cdot)$ denotes the pre-trained Alex Net that extracts the features from the input image;
$X^{t-1}$ denotes the original image with the corresponding pixel revealed.

The complete loss function for optimizing $G$ can then be defined as 
\begin{equation}
\begin{aligned}
L_{\text{SEEK}} = \lambda_4 \Lpix(\hat{p}, p) + \lambda_5 \Lpct(X^{t-1}, \tilde{X}^{t-1})
\end{aligned}
\end{equation}
For each input image $X$, $G$ only takes one step in the training phase for loss computation.
Because the mask is created uniformly randomly, the training is able to cover all states of prediction of the masked pixels.

\prg{Evaluation}
With trained $H$ and $G$, the adversary can launch the watermark removal attack on an arbitrary watermarked image $I$.
As demonstrated in the \textit{evaluation} part of \autoref{alg:hsp}, given $I$, the adversary creates a hard mask with $I$ and $H$, denoted as $M = \text{get\_hard\_mask}(\sigma (H(I)))$.
Here, $\sigma(\cdot)$ is the sigmoid function;
$\text{get\_hard\_mask}(\cdot)$ is a function that maps values in the input greater than $0.5$ to $1$, while those less than $0.5$ to $0$.
Next, the adversary masks $I$ with $M$ to get $\bar{I}^{|M|} = M \odot I$.
Here, we use $|M|$ to denote the total number of steps in the recursive reconstruction process.
Let $t$ be the index of the current step.
In each iteration, $G$ takes in the masked image at step $t$ to predict the next token $\hat{p}=G(\bar{I}^t)$.
Then, the masked image in the next step $\bar{I}^{t-1}$ can be derived with the encoding function, denoted as $\bar{I}^{t-1} = \enc(\bar{I}^{t}, \hat{p})$.
Lastly, with the last masked pixel reconstructed, the step index $t \gets 0$.
The adversary can eventually obtain the purged image $\hat{I} = \bar{I}^0$.

\section{Theoretical Analysis}
In this section, we would like to theoretically discuss how the reconstruction order of the masked pixels affects the attack performance. 
First, we are going to introduce some notations and assumptions. 

\prg{Masking Set}
We denote the collection of the color vectors of the masked pixels by $M$ and the cardinality of $M$ by $|M|$. 
Then, we need $|M|$ steps to recover the image with our pixel generator step by step. 

\prg{Color Vector} 
We denote the original color vector of the pixel we select to predict in the $i$th step by $\bm{x}^{i} = (r_{i},g_{i},b_{i})$ and the prediction in the $i$th step by $\bm{x}_{i} = (\hat{r_{i}},\hat{g_{i}},\hat{b_{i}})$. 

\prg{Element and Order}
The order $(\bm{x}^{1},\bm{x}^{2}, \cdots, \bm{x}^{|M|})$ is determined by the masking model, which means that the perturbation of $\bm{x}^{i}$ may cause fewer discrepancies than the same perturbation of $\bm{x}^{i+1}(i = 1,\dots , |M|-1)$.
In fact, we give a score $\alpha_{i}>0$ for each pixel using the masking model and we estimate the discrepancies by $\alpha_{i}\|\bm{x}_{i} -\bm{x}^{i} \|$ when we change the color vector from $\bm{x}^{i}$ to $\bm{x}_{i}$. 

\prg{Accumulative Prediction Error and Discrepancy}
The accumulative prediction error is given by
\begin{align}
    APE = \left(\sum_{i=1}^{|M|} \|\bm{x}_{i} -\bm{x}^{i} \|^2\right)^{\frac{1}{2}}, 
\end{align}
and the accumulative prediction discrepancy is given by 
\begin{align}
    APD = \sum_{i=1}^{|M|} \alpha_{i}\|\bm{x}_{i} -\bm{x}^{i} \|. 
\end{align}

\prg{Assumptions} 
We assume that the predictor is an oracle predictor such that the error in the $i$th step is independent of the prediction order, and we denote the prediction error in the $i$th step by $y_i>0$. 
For example, if we recover the image by permutation $\pi$ of $\{1, 2, \dots , |M|\}$, then we have $\|x_{\pi(i)} - x^{\pi(i)}\| = y_i(i = 1,2,\dots ,|M|)$.
Then, $APE = \left(\sum_{i=1}^{|M|} y_{i}^2\right)^{\frac{1}{2}}$ is independent of the prediction order. Furthermore, we assume that $y_i\leq y_{i+1}(i = 1,\dots , |M|-1)$. 

\begin{lem}[Rearrangement Inequality, \cite{Day_1972}]\label{Rearrangement Inequality}
    Let $a_1\leq a_2\leq \dots \leq a_n$ and $b_1\leq b_2\leq \dots \leq b_n$ be two sequences of real numbers. Then, for any permutation $\pi$ of $\{1, 2, \dots , n\}$, we have 
    \begin{align}
        \sum_{i = 1}^{n} a_i b_{n-i+1}\leq \sum_{i = 1}^{n} a_i b_{\pi (i)} \leq \sum_{i = 1}^{n} a_i b_i. 
    \end{align}
\end{lem}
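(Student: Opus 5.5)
The plan is to prove the two inequalities separately, each by an exchange argument that repeatedly removes \emph{inversions} from $\pi$. Fix the sorted sequences $a_1\leq\dots\leq a_n$ and $b_1\leq\dots\leq b_n$, and write $S(\pi)=\sum_{i=1}^n a_i b_{\pi(i)}$. Since there are only finitely many permutations of $\{1,\dots,n\}$, it suffices to show two things. First, any $\pi$ that is not the identity can be modified without decreasing $S$ and while strictly reducing its number of inversions. Second, symmetrically, any $\pi$ that is not the reversal $i\mapsto n-i+1$ can be modified without increasing $S$ and while strictly reducing its number of \emph{non}-inversions.

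The key step is the two-element swap. Suppose $i<j$ with $\pi(i)>\pi(j)$, and let $\pi'$ be $\pi$ with the values at positions $i$ and $j$ exchanged. Only two terms of $S$ change, so
\begin{align}
S(\pi')-S(\pi) = a_i b_{\pi(j)} + a_j b_{\pi(i)} - a_i b_{\pi(i)} - a_j b_{\pi(j)} = (a_j-a_i)\bigl(b_{\pi(i)}-b_{\pi(j)}\bigr).
\end{align}
Here $a_j-a_i\geq 0$ because $i<j$, and $b_{\pi(i)}-b_{\pi(j)}\geq 0$ because $\pi(i)>\pi(j)$. Hence $S(\pi')\geq S(\pi)$.

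I would take $i,j$ to be an adjacent inversion, with $j=i+1$; one always exists when $\pi\neq\mathrm{id}$. Swapping an adjacent inversion lowers the inversion count by exactly one, so iterating reaches the identity after finitely many steps with $S$ nondecreasing along the way. This gives $S(\pi)\leq \sum_i a_i b_i$. The lower bound follows by the mirror argument: swap an adjacent non-inversion, where $\pi(i)<\pi(i+1)$. The same identity then shows $S$ does not increase, and each swap lowers the number of non-inversions by one. Iterating terminates at the reversal $\pi(i)=n-i+1$, giving $\sum_i a_i b_{n-i+1}\leq S(\pi)$. Alternatively, I could apply the upper bound to the sequences $a_i$ and $-b_{n-i+1}$, which are both nondecreasing.

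The only point needing care is the bookkeeping that the process terminates at exactly the identity, respectively the reversal. This is immediate once the argument is phrased via the inversion count, which strictly decreases and is zero only for the identity. Ties among the $a_i$ or $b_i$ cause no trouble, since the argument uses only the weak inequalities $\geq 0$.
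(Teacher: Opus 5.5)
Your exchange argument is correct and complete. The swap identity $S(\pi')-S(\pi)=(a_j-a_i)(b_{\pi(i)}-b_{\pi(j)})$ has the right sign in each direction. An adjacent inversion (resp.\ non-inversion) exists exactly when $\pi$ is not the identity (resp.\ not the reversal), and each adjacent swap changes the relevant count by exactly one. Ties are harmless, since only weak inequalities enter. Your shortcut for the lower bound via the nondecreasing sequence $c_i=-b_{n-i+1}$ also works once you reindex with the permutation $i\mapsto n-\pi(i)+1$.

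The paper, by contrast, does not prove the lemma at all. It imports it from \cite{Day_1972} as a black box. It uses only the right-hand inequality, with $a_i=\alpha_i$ and $b_i=y_i$, in the one-line proof of Theorem~\ref{thm:order}, so for the paper's purposes the first half of your argument already suffices.

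The citation buys brevity and points to a treatment of more general rearrangement inequalities. Your argument buys self-containedness and isolates exactly what is used, namely $(a_j-a_i)(b_k-b_l)\geq 0$ whenever $a_i\leq a_j$ and $b_l\leq b_k$. That condition is just supermodularity of $(x,y)\mapsto xy$. Hence the same adjacent-swap proof goes through verbatim for $\sum_i F(a_i,b_{\pi(i)})$ with any supermodular $F$, which is the usual direction in which the classical statement is generalized.
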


\begin{thm}\label{thm:order}
    If we recover the image by ascending order $O_1 = (\bm{x}^{1},\bm{x}^{2}, \cdots, \bm{x}^{|M|})$ and random order $O_2 = (\bm{x}^{\pi (1)},\bm{x}^{\pi (2)}, \cdots, \bm{x}^{\pi (|M|)})$, then \begin{align}\label{APDOI}
        APD (O_2) = \sum_{i=1}^{|M|} \alpha_{i} y_{\pi(i)} \leq APD (O_1) = \sum_{i=1}^{|M|} \alpha_{i} y_{i}. 
    \end{align}
\end{thm}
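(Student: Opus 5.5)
The plan is a direct application of the Rearrangement Inequality (Lemma~\ref{Rearrangement Inequality}), once the two quantities in \eqref{APDOI} are written as sums of products of two sequences.

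\textbf{Step 1: rewrite each $APD$.} By the oracle assumption, the prediction error incurred at step $i$ is $y_i$ no matter which pixel is predicted at that step. Under the ascending order $O_1$, pixel $\bm{x}^{i}$ is predicted at step $i$ and carries score $\alpha_i$. Hence
\begin{align*}
APD(O_1)=\sum_{i=1}^{|M|}\alpha_i y_i .
\end{align*}
Under the random order $O_2$, pixel $\bm{x}^{\pi(i)}$ is predicted at step $i$ and receives error $y_i$. Reindexing by the pixel rather than by the step, pixel $\bm{x}^{j}$ with $j=\pi(i)$ receives error $y_{\pi^{-1}(j)}$. This gives $APD(O_2)=\sum_j \alpha_j y_{\pi^{-1}(j)}$. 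Since $\pi^{-1}$ is again a permutation, I would rename it $\pi$; this matches the form $\sum_i \alpha_i y_{\pi(i)}$ stated in the theorem.

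\textbf{Step 2: check monotonicity.} The ordering produced by the masking model means that a given perturbation of $\bm{x}^{i}$ causes no more discrepancy than the same perturbation of $\bm{x}^{i+1}$. In terms of the scores, this is $\alpha_1\le\alpha_2\le\cdots\le\alpha_{|M|}$. The assumptions also state $y_1\le y_2\le\cdots\le y_{|M|}$. So both sequences are similarly ordered.

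\textbf{Step 3: conclude.} Apply the right-hand inequality of Lemma~\ref{Rearrangement Inequality} with $n=|M|$, $a_i=\alpha_i$ and $b_i=y_i$. This gives $\sum_i\alpha_i y_{\pi(i)}\le\sum_i\alpha_i y_i$, which is exactly \eqref{APDOI}.

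\textbf{Main obstacle.} The only delicate point is Step 1, namely being careful about which permutation ($\pi$ or $\pi^{-1}$) actually appears in $APD(O_2)$. This is resolved by noting that the lemma holds for every permutation. A second point to state explicitly is that the ordering $\alpha_i\le\alpha_{i+1}$ is the formal content of the phrase ``ascending order''. I would spell this out as the interpretation of the masking model's ranking, since the statement only describes it informally.
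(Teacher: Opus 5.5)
Your proof is correct and is the paper's argument: the paper likewise takes $\alpha_i\le\alpha_{i+1}$ and $y_i\le y_{i+1}$ from the assumptions and applies the right-hand inequality of the Rearrangement Inequality. Your explicit handling of $\pi$ versus $\pi^{-1}$ when writing out $APD(O_2)$ is a detail the paper skips, and you resolve it correctly by noting that the lemma holds for every permutation.
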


\begin{proof}
    According to our assumptions, we have $\alpha_i\leq \alpha _{i+1}, y_i\leq y_{i+1}(i = 1,\dots , |M|-1)$. Therefore,  \autoref{APDOI} holds by Lemma \autoref{Rearrangement Inequality}. 
\end{proof}

Theorem \autoref{thm:order} justifies our attack strategy in HS+. 
Theoretically, given a well-trained masking model that provides a correct reconstruction order, a pixel generator is going to achieve the maximum accumulative prediction discrepancy by following the reconstruction order.

\section{Evaluation}
\subsection{Experiment Settings}
The details of image manipulations, datasets, implementations, and baselines for attacks and defenses are listed in \autoref{sect:exp_settings}.

\prg{Evaluation Metrics}
The following metrics are used for evaluating attack performance.
\begin{itemize}
    \item \textbf{Peak Signal-to-Noise Ratio} (PSNR): 
    PSNR is used to measure image quality by comparing an image with its processed version.
    
    \item \textbf{Structural Similarity Index} (SSIM): 
    SSIM is used to measure the perceived quality of an image by comparing a processed version with it.
    
    \item \textbf{Learned Perceptual Image Patch Similarity} (LPIPS): 
    LPIPS measures the similarity between two images by leveraging deep learning to align with human visual perception.
    
    \item \textbf{Bit Accuracy} (BA): Given an original watermark and an extracted watermark in binary form, BA is the ratio of correctly recovered bits to the length of the watermark.
    
    \item \textbf{Inverse Distance} (ID): 
    ID is only used to evaluate the extracted watermark of TRW. 
    A watermark is claimed to be detected only if the $l_1$ distance of the extracted sequence from the watermark is below a pre-defined threshold.
    We follow the previous study \cite{kassis2025unmarker} and set the threshold to $\frac{1}{71}$ to lift the difficulty of the attack task.
    
    \item \textbf{Detection Rate} (DR):
    DR is the ratio of the number of watermarks being detected to the total number of attempts.
\end{itemize}

\subsection{Effectiveness and Fidelity}
\prg{Evaluation of Defense Watermarks}
Before evaluating the attacks, we first evaluate the robustness of the defense watermarks.
The robustness of a defense watermark is measured by the degree of resistance when a watermarked image is attacked.
We test the watermark methods with image manipulations because they are considered easy to implement and efficient to launch.
In this evaluation, we focus on the effectiveness and robustness of the watermarks, as previous studies have demonstrated their covertness.

As illustrated in \autoref{fig:res_manipulation}, the defense watermarks are tested against five types of image manipulations with different parameters.
We exclude metrics for visual quality here because the robustness of defense watermarks is the main concern.
Additionally, these manipulations are not going to drastically affect visual quality of images \cite{kassis2025unmarker,lukas2023ptw}.
Image rotation is not included because this can be simply detected and corrected before the validation process \cite{zhao2024invisible}.

The experimental results demonstrate that the detection rate of each defense watermark reaches $1.00$ when no image manipulation is applied to the watermarked images.
Image cropping is effective in removing GANF watermarks and can decrease the detection rate of HiDDeN and PTW.
However, the rest of the watermarks show great robustness to cropping, which indicates that they are somewhat immune to spatial manipulation.
Spectral manipulations such as blurring, quantization, and JPEG compression can significantly affect the high-frequency components of watermarked images.
Hence, the non-semantic watermarks (from AF to StableSig.) suffer from decreases in detection rate in different levels.
In contrast, semantic watermarks (StegaStamp and TRW) show superior performance, where their detection rate remains $1.00$ in most cases, proving their robustness against image manipulations.

\begin{figure*}[t!]
    \centering
    \includegraphics[width=\textwidth]{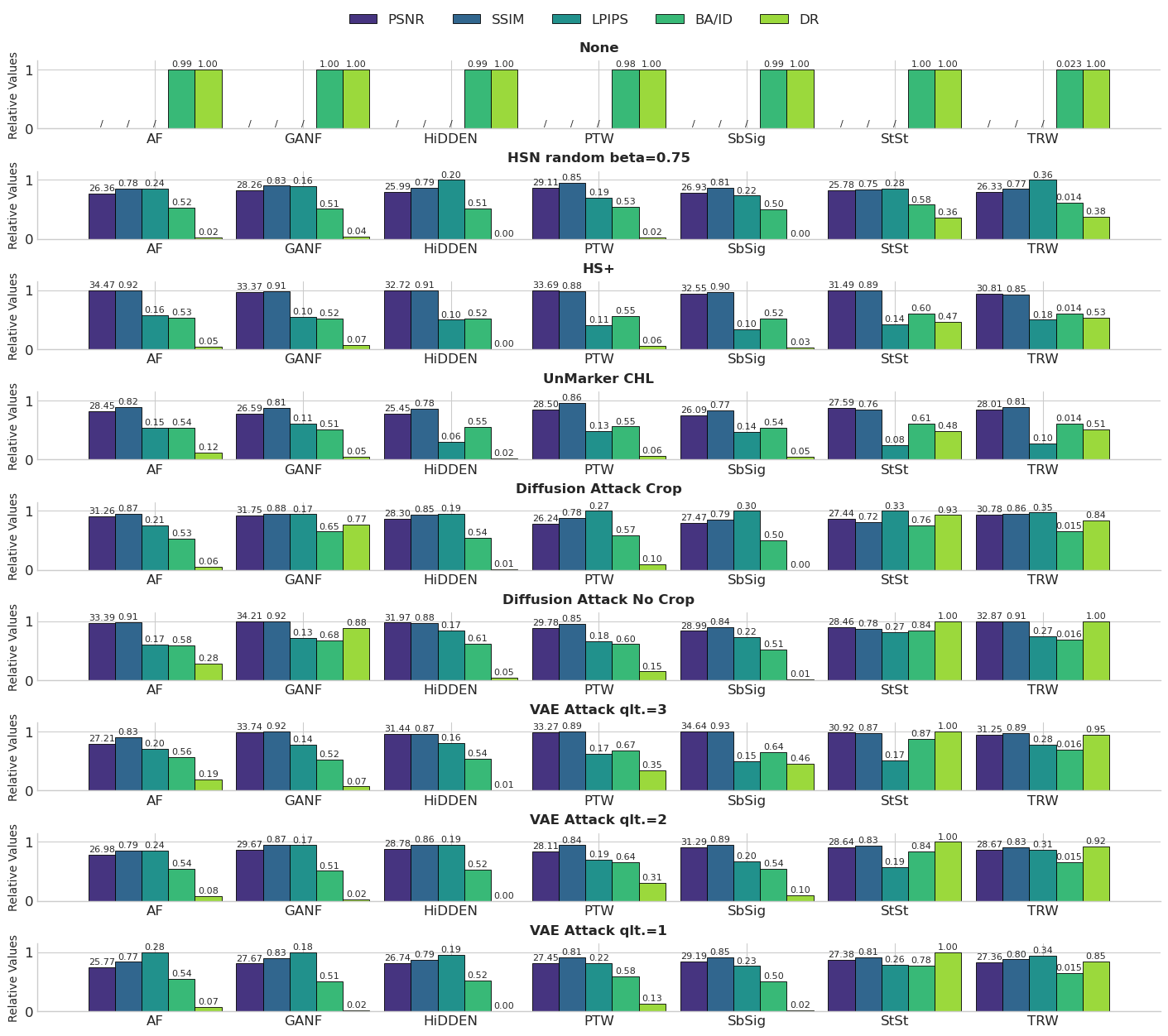}
    \caption{Evaluation of Attacks on Defense Watermarks. Each sub-figure contains the results of a defense watermark against image manipulations. StableSignature and StegaStamp are abbreviated as SbSig and StSt. The fourth bar of TRW represents the inverse distance, whereas the rest of the fourth bars stand for bit accuracy. The height of the bars represents the normalized value under one metric, and the actual values are on top of the bars.}
    \label{fig:eval_attacks}
\end{figure*}

\prg{Evaluation of Watermark Removal Attacks}
Now, we compare HIDE\&SEEK with the SOTA attacks by launching these attacks on the watermarked image generated by the defense watermarks.
As demonstrated in \autoref{fig:eval_attacks}, Diffusion Attack and VAE Attack provide promising results in removing non-semantic watermarks.
However, they fail to remove semantic watermarks because the watermarks residing in the low-frequency components of images are very robust.
The critical idea of such attacks is that the noise introduced must be sufficient to jam the watermark signal.
Next, a powerful generator is employed to recover the purged image.
This explains why Diffusion Attack preserves better visual quality in most cases than VAE attack, because diffusion model is considered more sophisticated than vanilla auto-encoder.
We notice that UnMarker, HSN and HS+ are effective in removing both the semantic and non-semantic watermarks.
This is because these attacks are designed to cause frequency discrepancies in both the low- and high-frequency domains of watermarked images.
Visualized attack results are provided in \autoref{fig:exp_visual}.

HSN shows better attack performance in most cases, where the detection rate is the lowest among all attacks.
When facing GANF, VAE attacks result in a lower detection rate compared to HSN.
This is because GANF is sensitive to spatial manipulations such as cropping, which is shown in \autoref{fig:res_manipulation}.
Both two versions of HIDE\&SEEK do not include cropping or perform global image perturbation to preserve high image quality.
Therefore, it does not always have the best performance compared to the others.
HS+ successfully removes the watermark while preserving high visual quality.
The PSNR values of HS+ remain above $30$ and the SSIM values are all above $0.85$ in all tests .
Although HS+ does not always have the best attack performance, it is effective when faced with HiDDeN and StableSignature, where the detection rate is reduced to $0.0$ and $0.03$.
Semantic watermarks are observed to show impressive robustness when faced with advanced attacks.
Compared to UnMarker, HS+ does not include cropping in the attack process, resulting in an expected higher visual quality.
Further, HS+ outperforms the other attacks in most cases.
This indicates that the perturbation induced by the auto-regressive model in the prediction process suffices to purge the watermark.

\prg{Evaluation of Efficiency}
We further evaluate HS with the SOTA attack in terms of efficiency.
The results are shown in \autoref{tab:eval_efficiency}.
Under the same setting, we randomly select $1,00$ watermarked images to launch the attacks.
We repeat this operation $10$ times to get the average time cost.
VAE attack has the lowest time cost, because the vanilla auto-encoder has a relatively simple structure.
Notably, HSN achieves a result similar to that of VAE attack, but with better attack performance as shown in \autoref{fig:eval_attacks}.
In contrast, Unmarker has the highest time cost because the perturbation is optimized for each watermarked image.
For the convenience of the readers, we convert the time cost into money based on the price of renting one A100 GPU on Amazon Web Services, which is $4.1$\$ per hour.
In conclusion, both HSN and HS+ can provide highly efficient watermark removal attack performance.

\begin{table}[t]
\caption{Average Consumption of the Watermark Removal Attacks on One Hundred Machine-Generated Images}
\label{tab:eval_efficiency}
\centering
\renewcommand{\arraystretch}{.8}
\setlength{\tabcolsep}{4pt}
\begin{tabular}{c|ccccc}
\toprule
   Attacks & HSN & HS+ & UnMarker & Diff Atk & VAE Atk\\
\midrule
    Time(s) & 871 & 1302 & 2553 & 1454 & 839 \\
    Money(\$) & 0.99 & 1.48 & 2.91 & 1.66 & 0.96 \\
\bottomrule
\end{tabular}
\end{table}

\subsection{Ablation Study}
\prginit{Impact of Masking Strategy and Ratio}
We first investigate how the masking strategy and ratio would affect the attack performance of HSN.
Three masking strategies are combined with masking ratios selected from $\{0.60, 0.65, 0.70, 0.75, 0.80\}$ to evaluate their impact on HSN faced with StegaStamp.
StegaStamp is selected because the robustness of StegaStamp allows HSN to have various results, ranging from low to high rather than having only low detection rates.
As demonstrated in \autoref{fig:ablation_masking}, scattered masking preserves the best image quality but results in high detection rates.
Given $\beta=0.60$, scattered HSN can have $31.18$ PSNR and $0.88$ SSIM, whereas the watermark detection rate is $1.0$.
Even with increased $\beta$, scattered HSN is less effective in removing the watermark compared to the other two strategies under the same setting.
This indicates that scattered masking tends to preserve more original frequency information.

Additionally, continuous masking leads to poorer image quality when $\beta$ is low compared to random masking, but they share similar image quality as $\beta$ increases.
This is because a larger $\beta$ covers more area of the watermarked image, leading to two masks that have more overlaps.
We also notice that when $\beta=0.60$, both random and continuous masking have already provided considerable attack performance, where the detection rate is reduced to $0.58$ and $0.55$.
Continuous masking tends to have lower detection rates given a small $\beta$ than random masking, because each of the selected patches is adjacent to another of them.
This induces a more intense perturbation on the frequency domain.

\begin{table}[t]
\caption{Impact of the Loss Terms}
\label{tab:abl_loss}
\centering
\small
\renewcommand{\arraystretch}{1.}
\setlength{\tabcolsep}{4pt}
\begin{tabular}{ccc|ccccc}
\toprule
   $\Lfrq$ & $\Lsmt$ & $\Lpct$ & PSNR$\uparrow$ & SSIM$\uparrow$ & LPIPS$\downarrow$ & Bit Acc.$\downarrow$ & Detect$\downarrow$\\
\midrule
    $\checkmark$ & $\checkmark$ & $\checkmark$ & 31.49 & 0.89 & 0.14 & 0.63 & 0.52 \\
    $\times$ & $\checkmark$ & $\checkmark$ & 30.65 & 0.86 & 0.18 & 0.61 & 0.45 \\
    $\checkmark$ & $\times$ & $\checkmark$ & 30.89 & 0.87 & 0.13 & 0.67 & 0.66 \\
    $\checkmark$ & $\checkmark$ & $\times$ & 31.15 & 0.88 & 0.16 & 0.63 & 0.53 \\
\bottomrule
\end{tabular}
\end{table}

\prg{Impact of Loss Terms}
Next, we evaluate how the loss terms in the training process of HS+ affect the attack performance.
The results are listed in \autoref{tab:abl_loss}, where we exclude the results of $(WH - \|M\|^2)$ and $\Lpix$.
When $(WH - \|M\|^2)$ is removed, the number of patches to mask naturally decreases to $0$, so we omit the test.
$\Lpix$ is the fundamental loss term for training the pixel generator and thus cannot be omitted.
StegaStamp is once again used here as the defense watermark for its outstanding robustness.
We remove the effect of a loss term by setting its $\lambda$ to $0$ in the training process.

\begin{table}[t]
\caption{Impact of the Reconstruction Order}
\label{tab:abl_order}
\centering
\small
\renewcommand{\arraystretch}{1.2}
\setlength{\tabcolsep}{5pt}
\begin{tabular}{c|ccccc}
\toprule
   Order & PSNR$\uparrow$ & SSIM$\uparrow$ & LPIPS$\downarrow$ & Bit Acc.$\downarrow$ & Detect$\downarrow$\\
\midrule
    Original & 31.49 & 0.89 & 0.14 & 0.63 & 0.52 \\
    Inverse & 32.18 & 0.90 & 0.12 & 0.65 & 0.58 \\
    Random & 31.91 & 0.89 & 0.14 & 0.63 & 0.54 \\
\bottomrule
\end{tabular}
\end{table}

$\Lfrq$ contributes to locating pixels that have more impact in the image frequency domain.
When $\Lfrq$ is bypassed, the weight of $\Lsmt$ gains, resulting in a masking model that focuses on masking pixels that have a greater impact on the semantic domain of the image.
Hence, the visual quality of purged images decreases while the attack performance increases.
However, when $\Lsmt$ is bypassed, we do not observe any improvement in attack performance as the visual quality of purged images drops.
Since StegaStamp is a semantic watermark, its watermark signal resides in the low-frequency components.
Removing the semantic loss might have resulted in a masking model that cannot mask pixels that significantly affect the semantic domain of the watermarked image.
The reason is that the high-frequency components of an image are easier to modify than the low-frequency components where energy is concentrated.
Lastly, when $\Lpct$ is bypassed, we observe that there is a slight decrease in the visual quality of the purged images, but the attack performance remains at the same level.
This indicates that $\Lpct$ helps improve the visual quality of the purged images.
Since this loss term is not involved in training the masking model, it hardly has an impact on the attack performance.
Thus, we believe that the masking model plays a more important role in the attack process.

\prg{Impact of Reconstruction Order}
We further investigate the impact of the reconstruction order on the attack performance.
Our original idea is to recover pixels following an order, where the least vulnerable pixel is first reconstructed and the most vulnerable pixel is reconstructed at last.
We conduct two parallel experiments with StegaStamp as the defense watermark, where pixels are reconstructed in inverse and random order.
The results are shown in \autoref{tab:abl_order}, where the original order achieves the best attack performance with $0.63$ bit accuracy and $0.52$ detection rate along with the poorest visual quality.
The inverse reconstruction order results in the best visual quality but with the worst attack performance, whereas the random order slightly deviates from the results of the original order.
This indicates that the reconstruction order of the pixel generator is going to affect the visual quality of the purged image.
Furthermore, this trades off the effectiveness and the fidelity of the watermark removal attack.

\section{Discussion}
\prg{Feasibility}
Both HSN and HS+ do not require any access to watermark detector or its feedback.
The generator and masking model can be trained on an NVIDIA RTX5090 with 32GB VRAM, which is affordable to individuals.
Training datasets such as ImageNet are publicly available.
It is thus feasible for individual adversaries to implement such attacks.

Compared to reconstruction attacks and perturbation attacks, HIDE\&SEEK does not globally perturb the watermarked image.
Comprehensive evaluations show that partial masking and pixel reconstruction have induced sufficient discrepancies that remove the watermark.
Furthermore, unlike perturbation attacks such as UnMarker, HIDE\&SEEK does not perform image manipulations such as cropping that significantly decreases the visual quality of the purged image.
Additionally, once trained, the masking model and generator can be used for general attack purposes.
In contrast, for each watermarked image, UnMarker specifically optimizes two perturbations that aim to distort the high- and low-frequency domains of the watermarked image.
This results in an extra time cost if the number of watermarked images increases.

\prg{Future Work}
A general problem that exists among all defense watermarks for MGIs is that an adversary is always able to trade off between the watermark detection rate and the visual quality of the watermarked image.
In other words, decreasing the quality of a watermarked image will eventually result in a purged image where the embedded watermark can no longer be detected.
Advanced watermark removal attack should guarantee low watermark detection rates while improving the quality of purged images.
Meanwhile, our study reveals that current defense watermarks are vulnerable and can be removed with an acceptable cost in visual quality.
More robust watermarking schemes are to be devised and other types of methods for detecting MGIs should also be taken into consideration.

In terms of generalizability, when trained solely on real-world image datasets such as Imagenet, HS performs well on realistic images. 
But HS does not generalize well to cross-domain datasets such as AnimeDL-2M that comprise unrealistic anime images.
However, it is feasible to combine such unrealistic datasets into the training dataset to further boost the generalizability of HS.

\section{Conclusion}
In this study, we devise a novel image watermark removal attack called HIDE\&SEEK that has two versions respectively named HIDE\&SEEK Naive and HIDE\&SEEK Plus.
HSN utilizes a masked auto-encoder to mask the watermarked image and then recover the masked patches to create the purged image.
HS+ employs a masking model to subtly select vulnerable pixels to mask and uses a pixel generator that recursively reconstructs the masked pixels.
The most vulnerable pixel is reconstructed in the last step to induce the greatest discrepancy.
Through comprehensive evaluations, we show that both HSN and HS+ are effective against the current defense watermark.
HSN outperforms SOTA attacks in terms of attack performance, while HS+ preserves high visual quality with a considerably low watermark detection rate when faced with each defense watermark.
Our study further necessitates research in this domain that prevents misuse of GenAI.

\section*{Ethical Considerations}
In conducting this research, we strictly adhered to a "no-harm" principle. Our methodology was designed to evaluate the robustness of watermarking algorithms in a purely theoretical and isolated context, ensuring that no individuals, organizations, or operational systems were negatively impacted.

\prg{Isolation from Operational Systems}
All experiments described in this paper were performed in a fully offline, local environment. We utilized open-source implementations of watermarking protocols and deployed them on our own hardware. At no point did our attack method interact with, query, or stress-test live commercial APIs or external web services. Consequently, this research caused no service disruption, latency, or financial cost to any model providers or platforms.

\prg{Use of Public, Non-Sensitive Data}
The evaluation was conducted exclusively using standard, publicly available academic datasets. We did not harvest data from private users, nor did we target specific content creators. By relying solely on established benchmarks, we ensured that no personal data was processed, and no individual's privacy or intellectual property rights were infringed upon during the course of this study.

\prg{Safe Security Assessment}
Our work treats the watermarking signal as a mathematical artifact to be analyzed, rather than attacking the infrastructure that hosts it. By confining our "red-teaming" efforts to a sandbox environment, we demonstrate that security vulnerabilities can be identified and documented without risk to the broader digital ecosystem or its stakeholders.

\section*{Open Science}
To promote transparency and reproducibility, we will release artifacts including all the code of all empirical studies employed in this paper. 
Our code will be released after acceptance.
All datasets we use are included in the artifact as they are public accessible.

\section*{Generative AI Usage}
In the preparation of this manuscript, we employed generative AI tools strictly for the purpose of linguistic refinement. Specifically, these tools were used to check for grammatical errors and to improve the flow and readability of the text.

We affirm that:
\begin{itemize}
    \item \textbf{No Content Generation:} The core concepts, experimental design, data analysis, and scientific conclusions are entirely the work of the authors.

    \item \textbf{No Synthetic Text:} No substantial portions of the text were generated by AI. The tools acted solely as a copy-editing assistant.

    \item \textbf{Human Oversight:} All suggestions made by the AI were manually reviewed and verified by the authors, who take full responsibility for the accuracy and integrity of the final publication.
\end{itemize}


\bibliographystyle{ACM-Reference-Format}
\bibliography{ref}

@String{Computing = "Computing" }

@String{Computer = "{IEEE} Computer" }

@String{Springer = "Springer-Verlag" }

@article{zhang2022mask,
  title={How mask matters: Towards theoretical understandings of masked autoencoders},
  author={Zhang, Qi and Wang, Yifei and Wang, Yisen},
  journal={Advances in Neural Information Processing Systems},
  volume={35},
  pages={27127--27139},
  year={2022}
}

@article{shen2025hatebench,
  title={HateBench: Benchmarking Hate Speech Detectors on LLM-Generated Content and Hate Campaigns},
  author={Shen, Xinyue and Wu, Yixin and Qu, Yiting and Backes, Michael and Zannettou, Savvas and Zhang, Yang},
  year={2025},
  publisher={CISPA}
}

@inproceedings{ma2025meme,
  title={From Meme to Threat: On the Hateful Meme Understanding and Induced Hateful Content Generation in Open-Source Vision Language Models},
  author={Ma, Yihan and Shen, Xinyue and Qu, Yiting and Yu, Ning and Backes, Michael and Zannettou, Savvas and Zhang, Yang},
  booktitle={USENIX Security Symposium (USENIX Security). USENIX},
  year={2025}
}

@inproceedings{shen2025gptracker,
  title={GPTracker: A Large-Scale Measurement of Misused GPTs},
  author={Shen, Xinyue and Shen, Yun and Backes, Michael and Zhang, Yang},
  booktitle={2025 IEEE Symposium on Security and Privacy (SP)},
  pages={336--354},
  year={2025},
  organization={IEEE}
}

@article{tang2025towards,
  title={Towards Extensible Detection of AI-Generated Images via Content-Agnostic Adapter-Based Category-Aware Incremental Learning},
  author={Tang, Shuai and He, Peisong and Li, Haoliang and Wang, Wei and Jiang, Xinghao and Zhao, Yao},
  journal={IEEE Transactions on Information Forensics and Security},
  year={2025},
  publisher={IEEE}
}

@article{zhu2023genimage,
  title={Genimage: A million-scale benchmark for detecting ai-generated image},
  author={Zhu, Mingjian and Chen, Hanting and Yan, Qiangyu and Huang, Xudong and Lin, Guanyu and Li, Wei and Tu, Zhijun and Hu, Hailin and Hu, Jie and Wang, Yunhe},
  journal={Advances in Neural Information Processing Systems},
  volume={36},
  pages={77771--77782},
  year={2023}
}

@article{tramer2019adversarial,
  title={Adversarial training and robustness for multiple perturbations},
  author={Tramer, Florian and Boneh, Dan},
  journal={Advances in neural information processing systems},
  volume={32},
  year={2019}
}

@inproceedings{hayes2018learning,
  title={Learning universal adversarial perturbations with generative models},
  author={Hayes, Jamie and Danezis, George},
  booktitle={2018 IEEE Security and Privacy Workshops (SPW)},
  pages={43--49},
  year={2018},
  organization={IEEE}
}

@inproceedings{lee2023robust,
  title={Robust evaluation of diffusion-based adversarial purification},
  author={Lee, Minjong and Kim, Dongwoo},
  booktitle={Proceedings of the IEEE/CVF International Conference on Computer Vision},
  pages={134--144},
  year={2023}
}

@inproceedings{nie2022diffusion,
  title={Diffusion Models for Adversarial Purification},
  author={Nie, Weili and Guo, Brandon and Huang, Yujia and Xiao, Chaowei and Vahdat, Arash and Anandkumar, Animashree},
  booktitle={International Conference on Machine Learning},
  pages={16805--16827},
  year={2022},
  organization={PMLR}
}

@article{Day_1972, title={Rearrangement Inequalities}, volume={24}, DOI={10.4153/CJM-1972-093-x}, number={5}, journal={Canadian Journal of Mathematics}, author={Day, Peter W.}, year={1972}, pages={930–943}}

@inproceedings{wang2024must,
  title={Must: Robust image watermarking for multi-source tracing},
  author={Wang, Guanjie and Ma, Zehua and Liu, Chang and Yang, Xi and Fang, Han and Zhang, Weiming and Yu, Nenghai},
  booktitle={Proceedings of the AAAI Conference on Artificial Intelligence},
  volume={38},
  number={6},
  pages={5364--5371},
  year={2024}
}

@inproceedings{yang2024gaussian,
  title={Gaussian shading: Provable performance-lossless image watermarking for diffusion models},
  author={Yang, Zijin and Zeng, Kai and Chen, Kejiang and Fang, Han and Zhang, Weiming and Yu, Nenghai},
  booktitle={Proceedings of the IEEE/CVF Conference on Computer Vision and Pattern Recognition},
  pages={12162--12171},
  year={2024}
}

@inproceedings{ronneberger2015u,
  title={U-net: Convolutional networks for biomedical image segmentation},
  author={Ronneberger, Olaf and Fischer, Philipp and Brox, Thomas},
  booktitle={Medical image computing and computer-assisted intervention--MICCAI 2015: 18th international conference, Munich, Germany, October 5-9, 2015, proceedings, part III 18},
  pages={234--241},
  year={2015},
  organization={Springer}
}

@inproceedings{jiang2023evading,
  title={Evading watermark based detection of ai-generated content},
  author={Jiang, Zhengyuan and Zhang, Jinghuai and Gong, Neil Zhenqiang},
  booktitle={Proceedings of the 2023 ACM SIGSAC Conference on Computer and Communications Security},
  pages={1168--1181},
  year={2023}
}

@inproceedings{an2024waves,
  title={WAVES: benchmarking the robustness of image watermarks},
  author={An, Bang and Ding, Mucong and Rabbani, Tahseen and Agrawal, Aakriti and Xu, Yuancheng and Deng, Chenghao and Zhu, Sicheng and Mohamed, Abdirisak and Wen, Yuxin and Goldstein, Tom and others},
  booktitle={Proceedings of the 41st International Conference on Machine Learning},
  pages={1456--1492},
  year={2024}
}

@inproceedings{radford2021clip,
  title={Learning transferable visual models from natural language supervision},
  author={Radford, Alec and Kim, Jong Wook and Hallacy, Chris and Ramesh, Aditya and Goh, Gabriel and Agarwal, Sandhini and Sastry, Girish and Askell, Amanda and Mishkin, Pamela and Clark, Jack and others},
  booktitle={International conference on machine learning},
  pages={8748--8763},
  year={2021},
  organization={PmLR}
}

@inproceedings{he2022masked,
  title={Masked autoencoders are scalable vision learners},
  author={He, Kaiming and Chen, Xinlei and Xie, Saining and Li, Yanghao and Doll{\'a}r, Piotr and Girshick, Ross},
  booktitle={Proceedings of the IEEE/CVF conference on computer vision and pattern recognition},
  pages={16000--16009},
  year={2022}
}

@inproceedings{saberi2024robustness,
title={Robustness of {AI}-Image Detectors: Fundamental Limits and Practical Attacks},
author={Mehrdad Saberi and Vinu Sankar Sadasivan and Keivan Rezaei and Aounon Kumar and Atoosa Chegini and Wenxiao Wang and Soheil Feizi},
booktitle={The Twelfth International Conference on Learning Representations},
year={2024},
url={https://openreview.net/forum?id=dLoAdIKENc}
}

@inproceedings{lukas2024leveraging,
title={Leveraging Optimization for Adaptive Attacks on Image Watermarks},
author={Nils Lukas and Abdulrahman Diaa and Lucas Fenaux and Florian Kerschbaum},
booktitle={The Twelfth International Conference on Learning Representations},
year={2024},
url={https://openreview.net/forum?id=O9PArxKLe1}
}

@article{shi2023black,
  title={Black-box backdoor defense via zero-shot image purification},
  author={Shi, Yucheng and Du, Mengnan and Wu, Xuansheng and Guan, Zihan and Sun, Jin and Liu, Ninghao},
  journal={Advances in Neural Information Processing Systems},
  volume={36},
  pages={57336--57366},
  year={2023}
}

@inproceedings{zhou2024dataelixir,
  title={Dataelixir: Purifying poisoned dataset to mitigate backdoor attacks via diffusion models},
  author={Zhou, Jiachen and Lv, Peizhuo and Lan, Yibing and Meng, Guozhu and Chen, Kai and Ma, Hualong},
  booktitle={Proceedings of the AAAI Conference on Artificial Intelligence},
  volume={38},
  number={19},
  pages={21850--21858},
  year={2024}
}

@article{zhao2024invisible,
  title={Invisible image watermarks are provably removable using generative ai},
  author={Zhao, Xuandong and Zhang, Kexun and Su, Zihao and Vasan, Saastha and Grishchenko, Ilya and Kruegel, Christopher and Vigna, Giovanni and Wang, Yu-Xiang and Li, Lei},
  journal={Advances in Neural Information Processing Systems},
  volume={37},
  pages={8643--8672},
  year={2024}
}

@inproceedings{kassis2023breaking,
  title={Breaking security-critical voice authentication},
  author={Kassis, Andre and Hengartner, Urs},
  booktitle={2023 IEEE Symposium on Security and Privacy (SP)},
  pages={951--968},
  year={2023},
  organization={IEEE}
}

@inproceedings{li2020face,
  title={Face x-ray for more general face forgery detection},
  author={Li, Lingzhi and Bao, Jianmin and Zhang, Ting and Yang, Hao and Chen, Dong and Wen, Fang and Guo, Baining},
  booktitle={Proceedings of the IEEE/CVF conference on computer vision and pattern recognition},
  pages={5001--5010},
  year={2020}
}

@article{li2025fractal,
  title={Fractal generative models},
  author={Li, Tianhong and Sun, Qinyi and Fan, Lijie and He, Kaiming},
  journal={arXiv preprint arXiv:2502.17437},
  year={2025}
}

@inproceedings{jois2024pulsar,
  title={Pulsar: Secure steganography for diffusion models},
  author={Jois, Tushar M and Beck, Gabrielle and Kaptchuk, Gabriel},
  booktitle={Proceedings of the 2024 on ACM SIGSAC Conference on Computer and Communications Security},
  pages={4703--4717},
  year={2024}
}

@inproceedings{tancik2020stegastamp,
  title={Stegastamp: Invisible hyperlinks in physical photographs},
  author={Tancik, Matthew and Mildenhall, Ben and Ng, Ren},
  booktitle={Proceedings of the IEEE/CVF conference on computer vision and pattern recognition},
  pages={2117--2126},
  year={2020}
}

@inproceedings{lukas2023ptw,
  title={$\{$PTW$\}$: Pivotal tuning watermarking for $\{$Pre-Trained$\}$ image generators},
  author={Lukas, Nils and Kerschbaum, Florian},
  booktitle={32nd USENIX Security Symposium (USENIX Security 23)},
  pages={2241--2258},
  year={2023}
}

@inproceedings{yu2021artificial,
  title={Artificial fingerprinting for generative models: Rooting deepfake attribution in training data},
  author={Yu, Ning and Skripniuk, Vladislav and Abdelnabi, Sahar and Fritz, Mario},
  booktitle={Proceedings of the IEEE/CVF International conference on computer vision},
  pages={14448--14457},
  year={2021}
}

@inproceedings{yu2022responsible,
  title={Responsible Disclosure of Generative Models Using Scalable Fingerprinting},
  author={Yu, Ning and Skripniuk, Vladislav and Chen, Dingfan and Davis, Larry S and Fritz, Mario},
  booktitle={International Conference on Learning Representations},
  year={2022}
}

@inproceedings{fernandez2023stable,
  title={The stable signature: Rooting watermarks in latent diffusion models},
  author={Fernandez, Pierre and Couairon, Guillaume and J{\'e}gou, Herv{\'e} and Douze, Matthijs and Furon, Teddy},
  booktitle={Proceedings of the IEEE/CVF International Conference on Computer Vision},
  pages={22466--22477},
  year={2023}
}

@inproceedings{zhu2018hidden,
  title={Hidden: Hiding data with deep networks},
  author={Zhu, Jiren and Kaplan, Russell and Johnson, Justin and Fei-Fei, Li},
  booktitle={Proceedings of the European conference on computer vision (ECCV)},
  pages={657--672},
  year={2018}
}

@article{wen2023tree,
  title={Tree-rings watermarks: Invisible fingerprints for diffusion images},
  author={Wen, Yuxin and Kirchenbauer, John and Geiping, Jonas and Goldstein, Tom},
  journal={Advances in Neural Information Processing Systems},
  volume={36},
  pages={58047--58063},
  year={2023}
}

@inproceedings{frank2020leveraging,
  title={Leveraging frequency analysis for deep fake image recognition},
  author={Frank, Joel and Eisenhofer, Thorsten and Sch{\"o}nherr, Lea and Fischer, Asja and Kolossa, Dorothea and Holz, Thorsten},
  booktitle={International conference on machine learning},
  pages={3247--3258},
  year={2020},
  organization={PMLR}
}

@article{liu2023making,
  title={Making DeepFakes more spurious: evading deep face forgery detection via trace removal attack},
  author={Liu, Chi and Chen, Huajie and Zhu, Tianqing and Zhang, Jun and Zhou, Wanlei},
  journal={IEEE Transactions on Dependable and Secure Computing},
  volume={20},
  number={6},
  pages={5182--5196},
  year={2023},
  publisher={IEEE}
}

@inproceedings{zhao2021multi,
  title={Multi-attentional deepfake detection},
  author={Zhao, Hanqing and Zhou, Wenbo and Chen, Dongdong and Wei, Tianyi and Zhang, Weiming and Yu, Nenghai},
  booktitle={Proceedings of the IEEE/CVF conference on computer vision and pattern recognition},
  pages={2185--2194},
  year={2021}
}

@article{cao2025survey,
  title={A survey of ai-generated content (aigc)},
  author={Cao, Yihan and Li, Siyu and Liu, Yixin and Yan, Zhiling and Dai, Yutong and Yu, Philip and Sun, Lichao},
  journal={ACM Computing Surveys},
  volume={57},
  number={5},
  pages={1--38},
  year={2025},
  publisher={ACM New York, NY}
}

@inproceedings{kassis2025unmarker,
  title={UnMarker: A Universal Attack on Defensive Image Watermarking},
  author={Kassis, Andre and Hengartner, Urs},
  booktitle={2025 IEEE Symposium on Security and Privacy (SP)},
  volume={2},
  number={6},
  pages={8},
  year={2025}
}

@article{chen2024high,
  title={High-frequency matters: Attack and defense for image-processing model watermarking},
  author={Chen, Huajie and Zhu, Tianqing and Liu, Chi and Yu, Shui and Zhou, Wanlei},
  journal={IEEE Transactions on Services Computing},
  volume={17},
  number={4},
  pages={1565--1579},
  year={2024},
  publisher={IEEE}
}

\clearpage
\appendix

\section{Experiment Settings}\label{sect:exp_settings}
\prginit{Image Manipulations} Center Cropping, JPEG Compression, Quantization, Gaussian Blurring and Guided Blurring are chosen to be the image manipulations that distort watermarked images to remove embedded watermarks.
The parameters are selected following the previous studies \cite{kassis2025unmarker,lukas2023ptw}.

\begin{figure*}[t]
    \centering
    \includegraphics[width=\textwidth]{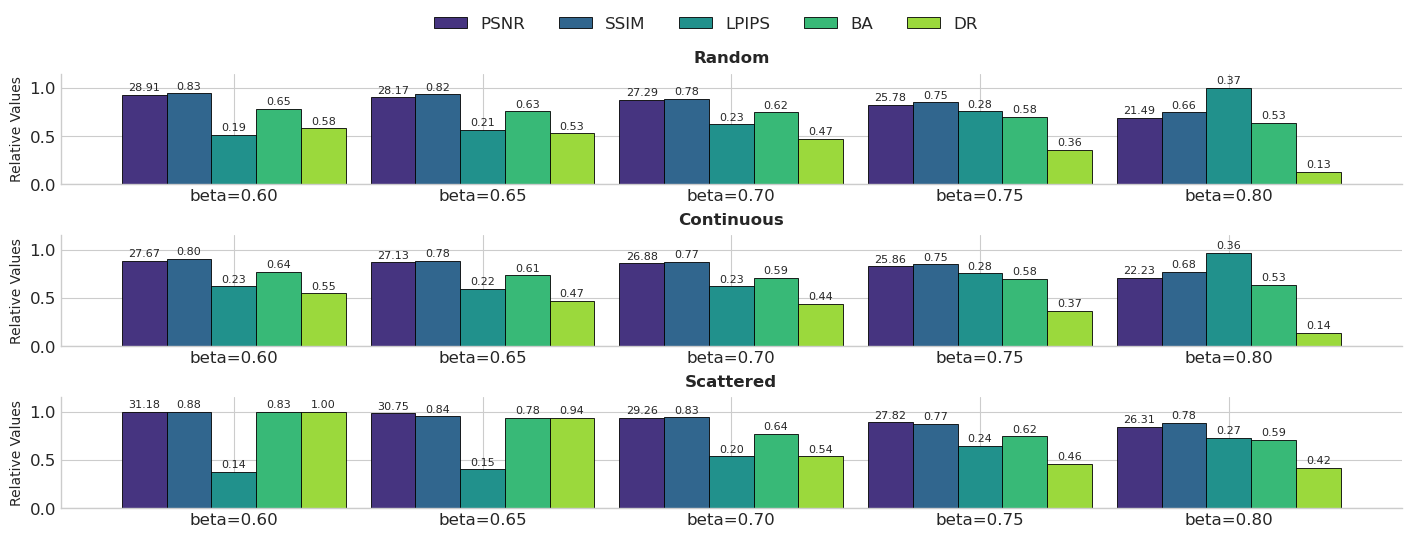}
    \caption{Impact of the Masking Strategy and Ratio. Each sub-figure contains experimental results of a masking strategy with different masking ratio ranging from $0.6$ to $0.8$ with $0.05$ step length. The height of the bars represents the normalized value under one metric, and the actual values are on top of the bars.}
    \label{fig:ablation_masking}
\end{figure*}

\prg{Datasets}
The implemented defense baselines released are trained on various datasets.
Thus, we evaluate the attacks on the datasets that correspond to the defense schemes:
TRW and StableSignature on LAION-5B,
GANF and AF on CelebA,
PTW on FFHQ,
HiDDeN on COCO,
StegaStamp on CelebA-HQ-1024 downscaled to $400 \times 400$.
The evaluation results are the average values derived from $100$ random images following the previous study \cite{jiang2023evading,kassis2025unmarker}.

\prg{Implementation Details}
For the generator in HSN, we employ the Masked Auto-Encoder Vision Transformer \cite{he2022masked} to be its model architecture.
In HS+, the mask model uses the UNet \cite{ronneberger2015u} architecture, and the pixel generator follows the Fractal Generative Model \cite{li2025fractal}.
In evaluation, we directly follow the default settings of the released implementations of the other methods.

\prg{Defense Baselines}
Among all existing image defense watermarking schemes, the following SOTA watermarking methods are selected to be our baselines.
This is because these works fit well into the practical scenario, where the adversary is assumed to have only black-box access to the GenAI APIs and not to have access to the detector.
\begin{itemize}
    \item \textbf{Artificial Fingerprint} (AF) \cite{yu2021artificial} embeds fingerprints in the training dataset that is used to train the generative model.
    The well-trained generative model automatically creates synthetic images with the fingerprints due to transferability.

    \item \textbf{Generative Artificial Network Fingerprinting} (GANF) \cite{yu2022responsible} creates bit-string fingerprint embeddings and incorporates the embeddings into multiple latent outputs of the generative model.
    Synthetic images of the generative model thus carry the fingerprints that can be bit-wise reconstructed for verification.

    \item \textbf{Pivotal Tuning Watermarking} (PTW) \cite{lukas2023ptw} freezes a pre-trained generative model and copies its parameters to create a counterpart.
    The counterpart is then trained with specially designed constraints to have similar performance with the pre-trained model while outputting watermarked images.

    \item \textbf{StegaStamp} \cite{tancik2020stegastamp} embeds invisible bit-string watermarks into synthetic images. 
    The robustness of the watermarks is high and can be successfully reconstructed even if the images are highly distorted in the real world scenario.

    \item \textbf{HiDDeN} \cite{zhu2018hidden} embeds watermarks in synthetic images using an encoder network.
    A noise layer is employed to distort the synthetic image in the training process before the watermarks are reconstructed by a decoder network so as to boost the robustness of the watermark.

    \item \textbf{StableSignature} \cite{fernandez2023stable} first trains a watermark encoder and a decoder to embed and reconstruct bit-string watermarks.
    Next, the decoder of a latent diffusion model is fine-tuned to output watermarked images with a fixed bit-string watermark that can be correctly decoded by the pre-trained decoder.

    \item \textbf{Tree-Ring Watermark} (TRW) \cite{wen2023tree} embeds watermarks in the Fourier space of initial noises in the reverse process of denoising diffusion implicit models.
    After being attacked, the synthetic images are reversed to the initial noises that can be used to recover the watermarks.
    The watermarks are highly resistant to distortion and attacks.
\end{itemize}

\prg{Attack Baselines}
The following SOTA attacks are selected because they have excellent attack performance and fit well into our attack scenario, where the adversary cannot access the detector and has only black-box access to the GenAI APIs.
\begin{itemize}
    \item \textbf{DiffAttack} \cite{saberi2024robustness,an2024waves,zhao2024invisible} employs a diffusion model that recovers a perturbed watermarked image to create a purged image.

    \item \textbf{VAEAttack} \cite{an2024waves,zhao2024invisible} is similar to DiffAttack, but its generative model is changed to a VAE model.

    \item \textbf{UnMarker} \cite{kassis2025unmarker} separately optimizes two perturbations that perturb the high- and low-frequency components of a watermarked image.
\end{itemize}

\begin{figure*}[t]
\centering
\includegraphics[width=\textwidth]{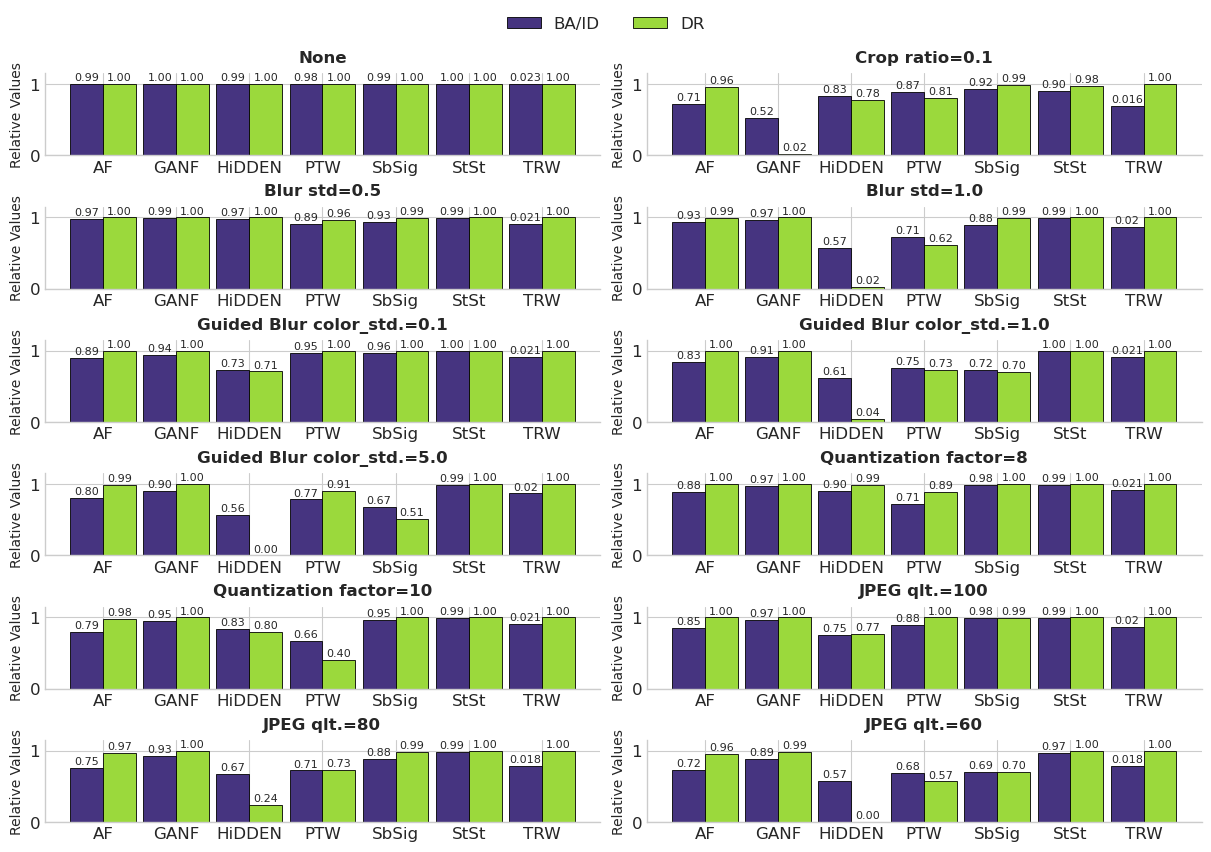}
\caption{Evaluation of Defense Watermarks against Image Manipulations. Each sub-figure contains the results of an image manipulation attacking different defense watermarks. StableSignature and StegaStamp are abbreviated as SbSig and StSt. The blue bar of TRW represents the inverse distance, whereas the rest of the blue bars stand for bit accuracy. The height of the bars represents the normalized value under one metric, and the actual values are on top of the bars.}
\label{fig:res_manipulation}
\end{figure*}

\begin{figure*}[t]
\centering
\includegraphics[width=\textwidth]{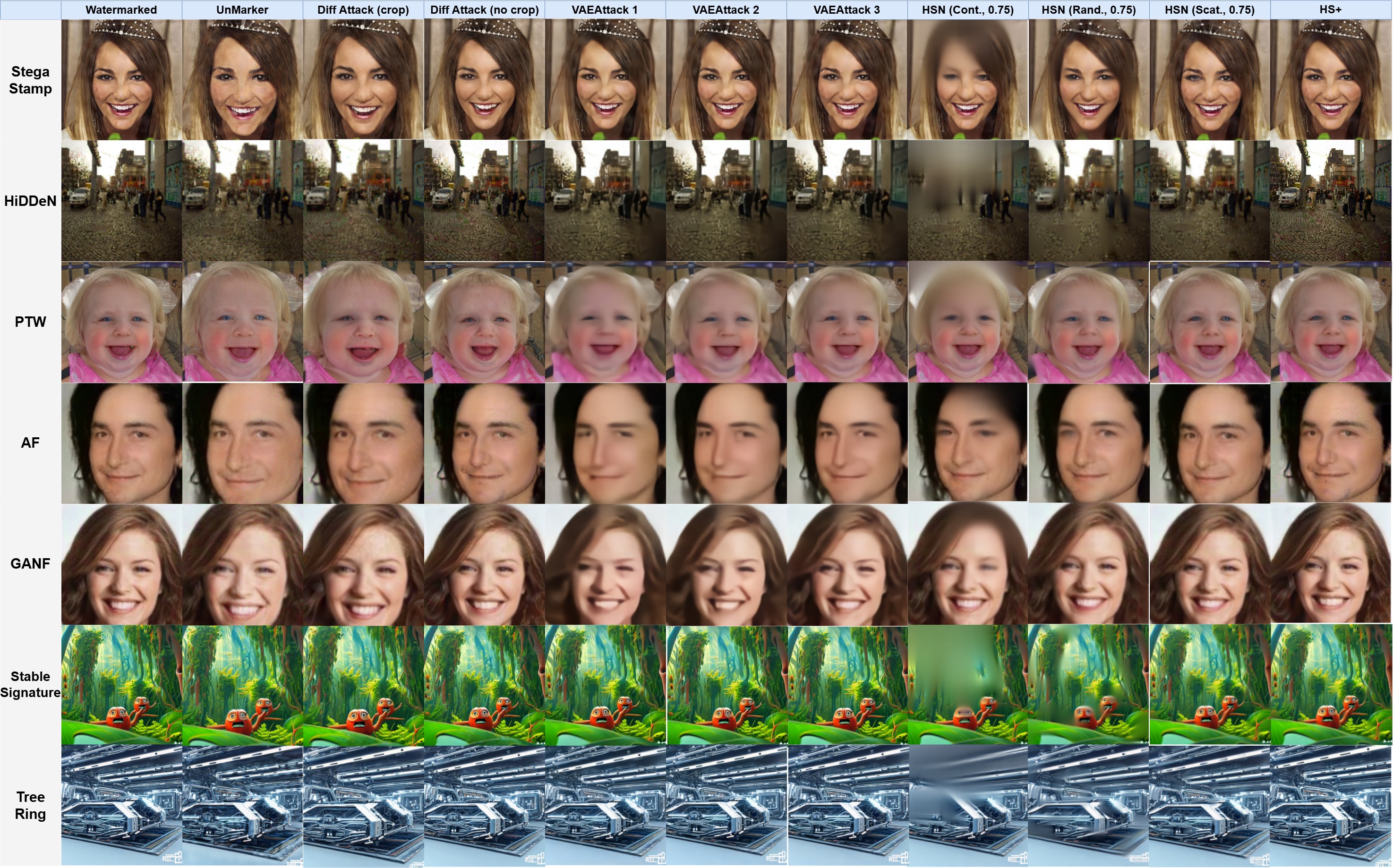}
\caption{\textbf{Visual Effect of Watermarked Images Attacked by Different Methods.}}
\label{fig:exp_visual}
\end{figure*}

\end{document}